\newcommand{\NN}{{\mathbb{N}}}
\newcommand{\eps}{\epsilon}
\DeclareMathOperator{\opt}{opt}
\newcommand{\Ind}{\vvmathbb 1}
\newenvironment{proofof}[1]{\medskip\noindent{\bf Proof of #1.}}{\hfill$\Box$\vspace*{1mm}\medskip}
\newcommand{\dist}{\mathrm{dist}}
\renewcommand{\phi}{\varphi}
\newcommand{\R}{\mathbb{R}}
\newcommand{\cL}{\mathcal{L}}
\def\ds1{\mathds{1}}
\renewcommand{\epsilon}{\varepsilon}
\renewcommand{\tilde}{\widetilde}
\newlength{\minipagewidth}
\newcommand{\beq}{\begin{equation}}
\newcommand{\eeq}{\end{equation}}
\newcommand{\beqa}{\begin{eqnarray}}
\newcommand{\eeqa}{\end{eqnarray}}
\newcommand{\beqan}{\begin{eqnarray*}}
\newcommand{\eeqan}{\end{eqnarray*}}
\def\ba#1\ea{\begin{align*}#1\end{align*}} 
\def\banum#1\eanum{\begin{align}#1\end{align}} 
\begin{document}

\title{Shortest Paths without a Map,\\but with an Entropic Regularizer\thanks{A preliminary version of this
paper was published in the Proceedings of the 63rd Annual IEEE Symposium on Foundations of Computer
Science.}}

\author{
S\'ebastien Bubeck\thanks{ML Foundations Group, Microsoft Research, Redmond, WA, United States.
{\tt sebastien.bubeck@gmail.com}.}
\and
Christian Coester\thanks{Department of Computer Science, University of Oxford, Oxford, United Kingdom.
{\tt christian.coester@cs.ox.ac.uk}. Research was done while at Tel Aviv University, 
supported by the Israel Academy of Sciences and Humanities \& Council for Higher 
Education Excellence Fellowship Program for International Postdoctoral Researchers.}
\and
Yuval Rabani\thanks{School of Computer Science \& Engr., The Hebrew University of Jerusalem,
Jerusalem, Israel. {\tt yrabani@cs.huji.ac.il}. Research supported in part by ISF grants 3565-21 
and 389-22, and by BSF grant 2018687.}
}

\maketitle

\begin{abstract}
In a 1989 paper titled ``shortest paths without a map'', Papadimitriou and Yannakakis introduced an online model of searching in a weighted layered graph for a target node, while attempting to minimize the total length of the path traversed by the searcher. This problem, later called layered graph traversal, is parametrized by the maximum cardinality $k$ of a layer of the input graph. It is an online setting for dynamic programming, and it is known to be a rather general and fundamental model of online computing, which includes as special cases other acclaimed models. The deterministic competitive ratio for this problem was soon discovered to be exponential in $k$, and it is now nearly resolved: it lies between $\Omega(2^k)$ and $O(k2^k)$. Regarding the randomized competitive ratio, in 1993 Ramesh proved, surprisingly, that this ratio has to be at least $\Omega(k^2 / \log^{1+\epsilon} k)$ (for any constant $\epsilon > 0$). In the same paper, Ramesh also gave an $O(k^{13})$-competitive randomized online algorithm. Between 1993 and the results obtained in this paper, no progress has been reported on the randomized competitive ratio of layered graph traversal. In this work we show how to apply the mirror descent framework on a carefully selected evolving metric space, and obtain an $O(k^2)$-competitive randomized online algorithm.
This matches asymptotically an improvement of the aforementioned lower bound~\cite{BCR22}, which we announced (among other results) after the initial publication of the results here.
\end{abstract}

\begin{keywords}
online algorithms, competitive analysis, layered graph traversal, chasing small sets
\end{keywords}

\begin{MSCcodes}
	 	68Q25, 68W20,  68W27, 68W40
\end{MSCcodes}

%


\section{Introduction}

\paragraph*{\bf Our results}
In this paper we present a randomized $O(k^2)$-competitive online algorithm
for width $k$ layered graph traversal. The problem, whose history is discussed
in detail below, is an online version of the conventional shortest path problem
(see~\cite{Sch12}), and of Bellman's dynamic programming paradigm. It is
therefore a very general framework for decision-making under uncertainty of
the future, and it includes as special cases other celebrated models of online
computing, such as metrical task systems. The following 1989 words of 
Papadimitriou and Yannakakis~\cite{PY89} still resonate today: ``the techniques 
developed [for layered graph traversal] will add to the scarce rigorous methodological 
arsenal of Artificial Intelligence.'' Our new upper bound on the competitive
ratio matches asymptotically the $\Omega(k^2)$ lower bound in~\cite{BCR22}, which was announced
following the initial publication of the results here. Our upper bound improves substantially 
over the previously known $O(k^{13})$ upper bound given by Ramesh~\cite{Ram93}.

\paragraph*{\bf Problem definition}
In layered graph traversal, a searcher attempts to find a short path from a starting node $a$ 
to a target node $b$ in an undirected graph $G$ with non-negative edge weights 
$w:E(G)\rightarrow\R_{+}$. It is assumed that the vertices of $G$ are partitioned into layers 
such that edges exist only between vertices of consecutive layers. The searcher knows initially 
only an upper bound $k$ on the number of nodes in a layer (this is called the {\em width} of $G$).
The search begins at $a$, which is the only node in the first layer of the graph (indexed layer $0$). 
We may assume that all the nodes of $G$ are reachable from $a$. When the searcher first reaches 
a node in layer $i$, the next layer $i+1$, the edges between layer $i$ and layer $i+1$, and the weights 
of these edges, are revealed to the searcher.\footnote{So, in particular, when the search starts at $a$, 
the searcher knows all the nodes of layer $1$, all the edges connecting $a$ to a node in layer $1$, 
and all the weights of such edges.} At this point, the searcher has to move forward from the current position 
(a node in layer $i$) to a new position (a node in layer $i+1$). This can be done along a shortest path 
through the layers revealed so far.\footnote{Notice that this is not necessarily the shortest path in $G$ 
between the nodes, because such a path may have to go through future layers that have not been 
revealed so far. In fact, some nodes in layer $i+1$ might not even be reachable at this point.} 
The target node $b$ occupies the last layer of the graph; the number of layers is unknown
to the searcher. The target gets revealed when the preceding layer is reached. The goal of 
the searcher is to traverse a path of total weight as close as possible to the shortest path 
connecting $a$ and $b$.\footnote{Note that the traversed path is not required to be simple 
or level-monotone.} The searcher is said to be $C$-competitive for some $C=C(k)$ iff for every
width $k$ input $(G,w)$, the searcher's path weight is at most $C\cdot w_G(a,b)$, where
$w_G(a,b)$ denotes the distance under the weight function $w$ between $a$ and $b$. The
competitive ratio of the problem is the best $C$ for which there is a $C$-competitive searcher.

Note that any bipartite graph is layered, and any graph can be converted 
into a bipartite graph by subdividing edges. Thus, the layered structure of the input graph is intended
primarily to parametrize the way in which the graph is revealed over time to the online algorithm.
Also notice that if the width $k$ is not known, it can be estimated, for instance by doubling the guess 
each time it is refuted. Therefore, the assumptions made in the definition of the problem are not
overly restrictive.

\subsection{Motivation and past work}
The problem has its origins in a paper of Baeza-Yates et al.~\cite{BCR88}, and possibly
earlier in game theory~\cite{Gal80}. In~\cite{BCR88}, motivated by applications in robotics,
the special case of a graph consisting of $k$ disjoint paths is proposed, under the name 
{\em the lost cow problem}. The paper gives a deterministic $9$-competitive algorithm
for $k=2$, and more generally a deterministic $2\frac{k^k}{(k-1)^{k-1}}+1\approx 2ek+1$ 
competitive algorithm for arbitrary $k$. A year later, Papadimitriou and Yannakakis
introduced the problem of layered graph traversal~\cite{PY89}. Their paper shows that
for $k=2$ the upper bound of $9$ still holds, and that the results of~\cite{BCR88} are
optimal for disjoint paths. Unfortunately, the upper bound of $9$ in~\cite{PY89} is the
trivial consequence of the observation that for $k=2$, the general case reduces to the disjoint
paths case. This is not true for general $k$. Indeed, Fiat et al.~\cite{FFKRRV91} give a $2^{k-2}$
lower bound, and an $O(9^k)$ upper bound on the deterministic competitive ratio in the general 
case. The upper bound was improved by Ramesh~\cite{Ram93} to $O(k^3 2^k)$, and
further by Burley~\cite{Bur96} to $O(k 2^k)$. Thus, currently the deterministic case is nearly 
resolved, asymptotically: the competitive ratio lies in $\Omega(2^k)\cap O(k2^k)$.

Investigation of the randomized competitive ratio was initiated in the aforementioned~\cite{FFKRRV91}
that gives a $\frac{k}{2}$ lower bound for the general case, and asymptotically tight $\Theta(\log k)$
upper and lower bounds for the disjoint paths case. In the randomized case, the searcher's strategy
is a distribution over moves, and it is $C$-competitive iff for every width $k$ input $(G,w)$, the expected 
weight of the searcher's path is at most $C\cdot w_G(a,b)$.
It is a ubiquitous phenomenon of online computing that randomization improves the competitive 
ratio immensely, often guaranteeing exponential asymptotic improvement (as happens in the
disjoint paths case of layered graph traversal). To understand why this might happen, one can
view the problem as a game between the designer of $G$ and the searcher in $G$. The game
alternates between the designer introducing a new layer
and the searcher moving to a node in the new layer. 
The designer is oblivious 
to the searcher's moves. Randomization obscures the predictability of the searcher's moves, and 
thus weakens the power of the designer.\footnote{This can be formalized through an appropriate definition
of the designer's information sets.} Following the results in~\cite{FFKRRV91} and
the recurring phenomenon of exponential improvement, a natural conjecture would have been that
the randomized competitive ratio of layered graph traversal is $\Theta(k)$. However, this natural
conjecture was rather quickly refuted in the aforementioned~\cite{Ram93} that surprisingly improves
the lower bound to $\Omega(k^2/\log^{1+\eps} k)$, which holds for all $\eps > 0$. Moreover, the
same paper also gives an upper bound of $O(k^{13})$. Thus, even though for the general case of
layered graph traversal the randomized competitive ratio cannot be logarithmic in the deterministic
competitive ratio, it is polylogarithmic in that ratio. The results of~\cite{Ram93} on randomized layered
graph traversal have not since been improved prior to our current paper. Very recently and following the
initial publication of the results in this paper, it was announced that the lower bound of~\cite{Ram93} 
was improved slightly to $\Omega(k^2)$~\cite{BCR22}.

\paragraph*{\bf Some Applications} Computing the optimal offline solution, a shortest path from source to target in a weighted layered
graph, is a simple example and also a generic framework for dynamic programming~\cite{Bel57}.
The online version has applications to the design and analysis of hybrid algorithms. In particular, the
disjoint paths case has applications in derandomizing online algorithms~\cite{FRRS94}, in the
design of divide-and-conquer online algorithms~\cite{FRR90,ABM93}, and in the design of advice
and learning augmented online algorithms~\cite{LV21,ACEPS20,BCKPV22}. 
In this context, Kao et al.~\cite{KRT93} resolve exactly the randomized competitive ratio of width $2$ 
layered graph traversal: it is roughly $4.59112$, precisely the solution for $x$ in the equation 
$\ln(x-1) = \frac{x}{x-1}$; see also~\cite{CL91}. For more in this vein, see also Kao et al.~\cite{KMSY94}. 

\paragraph{\bf Connections with Other Problems} 
Moreover, layered graph traversal is a very general model of online computing. For example, many online 
problems can be represented as chasing finite subsets of points in a metric space. In this problem, introduced 
by Chrobak and Larmore~\cite{CL91,CL93} under the name {\em metrical service systems}, the online
algorithm is given a sequence of bounded-cardinality subsets of points in a metric space. The algorithm
controls a server, initially at some point of the metric space. In response to a request, the algorithm must
move the server to one of the points of the requested subset, paying a cost equal to the distance traveled.
This problem is known to be {\bf equivalent} to layered graph traversal~\cite{FFKRRV91}. 
The width $k$ of the layered graph instance corresponds to the 
maximum cardinality of any request of the metrical service systems instance. (See Section~\ref{sec: applications}.)
Metrical service systems are a 
special case of metrical task systems, introduced by Borodin et al.~\cite{BLS87}. Width $k$ layered graph 
traversal includes as a special case metrical task systems in $k$-point metric spaces.\footnote{Indeed, this 
implies that while metrical service systems on $k$-point metrics are a special case of metrical task systems 
on $k$-point metrics, also metrical task systems on $k$-point metrics are a special case of metrical service 
systems using $k$-point requests.}
There is a tight bound of $2k-1$ on the deterministic competitive ratio of metrical task systems in any
$k$-point metric~\cite{BLS87}, and the randomized competitive ratio lies between $\Omega(\log k)$ and $O(\log^2 k)$ in all metric spaces \cite{BCLL19,BCR22}.
Thus, width $k$ layered graph traversal is strictly a more general problem than $k$-point metrical task systems. 
Another closely related problem is the $k$-taxi problem, whose best known lower bound for deterministic algorithms 
is obtained via a reduction from layered graph traversal~\cite{CK19}.

\subsection{Our methods}
Our techniques are based on the method of online mirror descent with entropic regularization
that was pioneered by Bubeck et al.~\cite{BCLLM18,BCLL19} in the context of the $k$-server
problem and metrical task systems, and further explored in this context in a number of recent 
papers~\cite{CL19,EL21,BC21}.
It is known that layered graph traversal is equivalent to its special case where the input graph is a tree~\cite{FFKRRV91}. Based on this reduction, we reduce width $k$ layered graph traversal to a problem that we name
the (depth $k$) \emph{evolving tree game}.

\paragraph{\bf Evolving Tree Game} In this game (defined formally in Section~\ref{sec: evolving tree}), one player, representing the algorithm,
must occupy a leaf in an evolving edge-weighted tree. Intuitively, the leaves correspond to the nodes in the current layer of the layered graph traversal instance. Another player, the adversary, is allowed to change the metric 
and topology of the tree using the following repertoire of operations: ($i$) continuously increase the 
weight of an edge incident to a leaf (see Fig.~\ref{fig:growth}); ($ii$) delete a leaf and the incident edge, and smooth 
the tree at the parent if its degree is now $2$  (see Fig.~\ref{fig:delete});\footnote{Smoothing is the reverse operation
of subdividing. In other words, smoothing is merging the two edges incident to a degree 
$2$ node. We maintain w.l.o.g. the invariant that the tree
has no degree $2$ node.} ($iii$) create two (or more) new leaves and 
connect them with weight $0$ edges to an existing leaf whose combinatorial depth is 
strictly smaller than $k$  (see Fig.~\ref{fig:fork}). The algorithm may move from leaf to leaf at any time, incurring \emph{movement cost} equal to the weight of the path between the leaves. If the algorithm occupies a leaf at 
the endpoint of a growing weight edge, it pays the increase in weight. We call this
the {\em service cost} of the algorithm. If the algorithm occupies a leaf that is being deleted, 
it must move to a different leaf prior to the execution of the topology change. At the end 
of the game, the total (movement + service) cost of the algorithm is compared against the adversary's cost, which is 
the weight of the lightest root-to-leaf path.\footnote{In fact, our algorithm can handle also the 
operation of reducing the weight of an edge, under the assumption that this operation incurs 
on both players a cost equal to the reduction in weight, if performed at their location.}

Layered graph traversal and the evolving tree game are connected by the following reduction, which we prove in Section~\ref{sec: applications}.
\begin{theorem}\label{thm:reductionShort}
If there is a $C$-competitive strategy for the evolving tree game on \emph{binary} trees of depth at most $k$, then there is a $C$-competitive strategy for traversing layered graphs of width $k$.
\end{theorem}

\paragraph{\bf Algorithm for the Evolving Tree Game} Mirror descent is used to generate a fractional online solution to the evolving tree game. 
The algorithm maintains a probability distribution on the leaves. A fractional solution can 
be converted easily on-the-fly into a randomized algorithm. As in~\cite{BCLLM18,BCLL19}
the analysis of our fractional algorithm for the evolving tree game is based on 
a potential function that combines (in our case, a modification of) the Bregman divergence 
associated with the entropic regularizer with a weighted depth potential. The Bregman 
divergence is used to bound the algorithm's {\em service cost} against the adversary's cost. The 
weighted depth potential is used to bound the algorithm's {\em movement cost} against 
its own service cost.

\begin{figure}
	\centering
	\begin{subfigure}{\linewidth}
		\centering
		\includegraphics[scale=0.3]{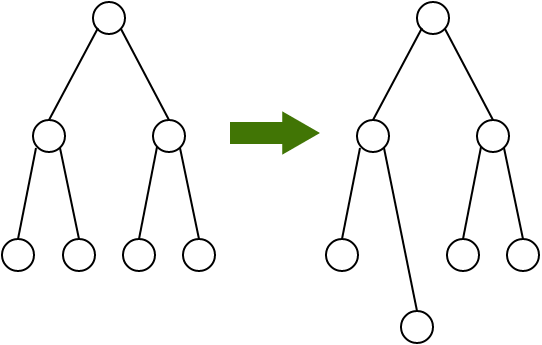}
		\caption{Continuous step.}\label{fig:growth}
	\end{subfigure}

	\vspace{0.7cm}
	
	\begin{subfigure}{\linewidth}
		\centering
		\includegraphics[scale=0.3]{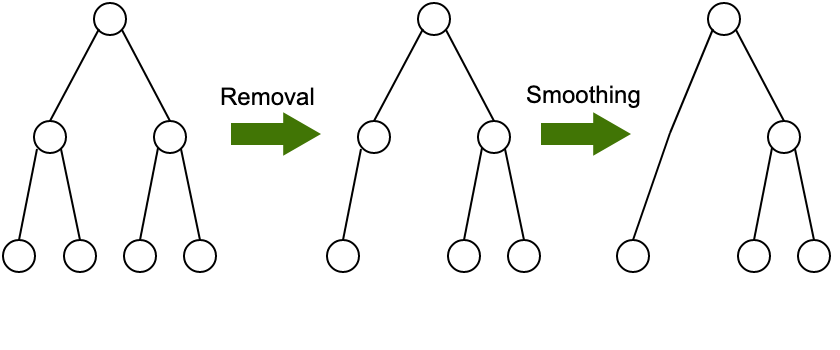}
		\caption{Delete step.}\label{fig:delete}
	\end{subfigure}

	\vspace{0.7cm}

	\begin{subfigure}{\linewidth}
		\centering
		\includegraphics[scale=0.3]{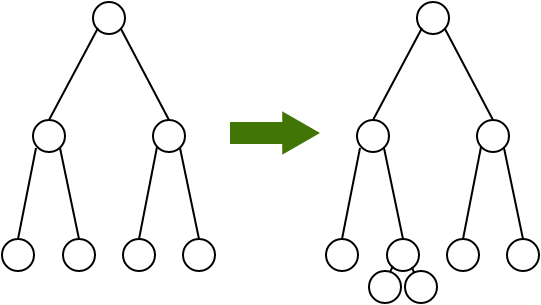}
		\caption{Fork step.}\label{fig:fork}
	\end{subfigure}

	\caption{Three types of metric and topology changes of the tree.}\label{fig:steps}
\end{figure}

However, in our setting, in contrast to~\cite{BCLLM18,BCLL19}, the metric on 
the set of leaves, and even the topology of the underlying tree, change dynamically. This 
poses a few new challenges to the approach. In particular, the potential function that works 
for metrical task systems is not invariant under the topology changes that are needed here. 
We resolve this problem by working with revised edge weights that slightly over-estimate the 
true edge weights. When a topology change would lead to an increase of the potential function 
(by reducing the combinatorial depth of some vertices), we prevent such an increase by 
downscaling the affected revised edge weights appropriately.

Even so, the extra cost incurred by the perturbation 
of entropy, which is required to handle distributions close to the boundary, cannot be handled 
in the same manner as in \cite{BCLLM18,BCLL19}. This issue is fixed by modifying both the Bregman 
divergence and the control function of the mirror descent dynamic. The latter damps down the 
movement of the algorithm when it incurs service cost at a rate close to $0$.

We obtain the following result for the evolving tree game.
\begin{theorem}\label{thm:evolvingIntro}
There exists a $O(k^2\log d_{\max})$-competitive randomized algorithm for the depth $k$ evolving tree game, where $d_{\max}$ is the maximum degree of a node at any point in the game.
\end{theorem}

In the competitive ratio, one factor $k$ comes from the maximal depth $k$ of the tree. 
The other factor $k\log d_{\max}$ is due to the fact that the perturbation of the entropy can be as small as $O(1/d_{\max}^{k})$, on account of having to set it consistently against
unknown future topology changes. (One way to think about it is that we are handling in the
background a complete tree of depth $k$, only part of which is currently revealed. In fact,
our bounds do not require that the number of leaves be restricted to $k$, as is the case with
layered graph traversal.)

Since the reduction in Theorem~\ref{thm:reductionShort} only requires a \emph{binary} tree, the factor $\log d_{\max}$ becomes a constant for the layered graph traversal problem.
\begin{theorem}\label{thm:mainIntro}
There is a randomized $O(k^2)$-competitive algorithm for traversing width $k$ layered graphs, as well as the equivalent problem of chasing sets of cardinality $k$ in any metric space.
\end{theorem}

We note that implementing the mirror descent 
approach in evolving trees is a major challenge to the design and analysis of online algorithms 
for online problems in metric spaces (e.g., the $k$-server problem, see~\cite{Lee18}, where an 
approach based on mirror descent in an evolving tree is also studied). Our ideas 
may prove applicable to other problems.

\subsection{Organization}
The rest of this paper is organized as follows. In Section~\ref{sec: evolving tree} we define and
analyze the evolving tree game. In Section~\ref{sec: motivation} we motivate the evolving tree
algorithm and analysis. In Section~\ref{sec: applications} we discuss the application to
layered graph traversal/small set chasing.

\section{The Evolving Tree Game}\label{sec: evolving tree}

For a rooted edge-weighted tree $T=(V,E)$, we denote by $r$ its root, by $V^0:= V\setminus\{r\}$ the set of 
non-root vertices and by $\cL\subseteq V^0$ the set of leaves. For $u\in V^0$, we denote by $p_u$ the parent 
of $u$ and by $w_u$ the length of the edge connecting $u$ to $p_u$.

The evolving tree game is a two person continuous time game between an adversary and an algorithm. 
The adversary grows a rooted edge-weighted tree $T=(V,E)$ of bounded degree. Without loss of generality, 
we enforce that the root $r$ always has degree $1$, and we denote its single child by $c_r$. Initially 
$V = \{r,c_r\}$, and the two nodes are connected by a zero-weight edge. The root $r$ will be fixed throughout 
the game, but the identity of its child $c_r$ may change as the game progresses. The game has continuous 
steps and discrete steps  (see Fig.~\ref{fig:steps}):. 
\begin{itemize}
\item {\bf Continuous step:} The adversary picks a leaf $\ell$ and increases the weight $w_\ell$ of the 
         edge incident on $\ell$ at a fixed rate of $w'_\ell = 1$ for a finite time interval.
\item {\bf Discrete step:} There are two types of discrete steps:
         \begin{itemize}
         \item {\bf Delete step:} The adversary chooses a leaf $\ell\in\cL$, $\ell\ne c_r$, 
                  and deletes $\ell$ and its incident edge from $T$. If the parent $p_\ell$ of $\ell$ 
                  remains with a single child $c$, the adversary smooths $T$ at $p_\ell$ as follows: 
                  it merges the two edges $\{c,p_\ell\}$ and $\{p_\ell,p_{p_\ell}\}$ into a single edge 
                  $\{c,p_{p_\ell}\}$, removing the vertex $p_\ell$, and assigns $w_c\gets w_c + w_{p_\ell}$.
         \item {\bf Fork step:} The adversary generates two or more new
                  nodes and connects all of them to an existing leaf $\ell\in\cL$ with edges of weight $0$.
                  Notice that this removes $\ell$ from $\cL$ and adds the new nodes to $\cL$.
         \end{itemize}
\end{itemize}
The continuous and discrete steps may be interleaved arbitrarily by the adversary.

A pure strategy of the algorithm maps the timeline to a leaf of $T$ that exists at that time. Thus, the start of
the game (time $0$ of step $1$) is
mapped to $c_r$, and at all times the algorithm occupies a leaf and may move from leaf to leaf. If the algorithm 
occupies a leaf $\ell$ continuously while $w_\ell$ grows, it pays the increase in weight (we call this the \emph{service cost}). If the algorithm moves 
from a leaf $\ell_1$ to a leaf $\ell_2$, it pays the total weight of the path in $T$ between $\ell_1$ and $\ell_2$ at 
the time of the move (we call this the \emph{movement cost}). A mixed strategy/randomized algorithm is, as usual, a probability distribution over pure 
strategies. A fractional strategy maps the timeline to probability distributions over the existing leaves. Writing 
$x_u$ for the total probability of the leaves in the subtree of $u$, this means that a fractional strategy maintains 
at all times a point in the changing polytope
\begin{equation}\label{eq: polytope}
K(T):=\left\{x\in\R_+^{V^0}\colon \sum_{v \colon p_v = u} x_v = x_u\,\forall u\in V\setminus\cL\right\},
\end{equation}
where we view $x_r:=1$ as a constant.

In our reduction from layered graph traversal to the evolving tree game, which we prove formally in Section~\ref{sec: applications}, the leaves of the evolving tree correspond to vertices in the last revealed layer of the layered graph traversal instance. When a new layer is revealed, the edges from the old layer can be modelled through a sequence of continuous steps and fork steps. Delete steps are used to handle the case that a vertex has no successor in the next layer.

Notice that the tree $T$, the weight function $w$, the point $x\in K(T)$, and the derived parameters are all
functions of the adversary's step and the time $t$ within a continuous step. Thus, we shall use henceforth 
the following notation: $T(j,0)$, $w(j,0)$, $x(j,0)$, etc. to denote the values of these parameters at the start of
step $j$ of the adversary. If step $j$ is a continuous step of duration $\tau$, then for $t\in [0,\tau]$, we use
$T(j,t)$, $w(j,t)$, $x(j,t)$, etc. to denote the values of these parameters at time $t$ since the start of step $j$.
If it is not required to mention the parameters $(j,t)$ for clarity, we omit them from our notation.

We require that for a fixed continuous step $j$, the function $x(j,\cdot)\colon[0,\tau]\to K(T)$ is absolutely
continuous, and hence differentiable almost everywhere. Notice that the polytope $K(T)$ is fixed during a 
continuous step, so this requirement is well-defined. We denote by $x'$ the derivative of $x$ with respect
to $t$, and similarly we denote by $w'$ the derivative of $w$ with respect to $t$. The cost of the algorithm 
during the continuous step $j$ is
$$
\int_0^\tau \sum_{v\in V^0} \left(w'_v(j,t) x_v(j,t) + w_v(j,t) \left|x'_v(j,t)\right|\right) dt.
$$
Notice that the first summand (the \emph{service cost}) is non-zero only at the single leaf $\ell$ for 
which $w_\ell(j,t)$ is growing.

In a discrete step $j$, the topology of the tree and thus the polytope $K(T)$ changes. The old tree is $T(j,0)$
and the new tree is $T(j+1,0)$. In a delete step, when a leaf $\ell$ is deleted, the algorithm first has to move 
from its old position $x(j,0)\in K(T(j,0))$ to some position $x\in K(T(j,0))$ with $x_\ell=0$. The cost of moving 
from $x(j,0)$ to $x$ is given by
$$
\sum_{v\in V^0} w_v(j,0) \left|x_v(j,0) - x_v\right|.
$$
The new state $x(j+1,0)$ is the projection of $x$ onto the new polytope $K(T(j+1,0))$, where the $\ell$-coordinate 
and possibly (if smoothing happens) the $p_\ell$-coordinate are removed.

In a fork step, the algorithm chooses as its new position any point in $K(T(j+1,0))$ whose projection onto $K(T(j,0))$ 
is the old position of the algorithm. No cost is incurred here (since the new leaves are appended at distance $0$).

The following lemma is analogous to Lemma~\ref{lm: fractional to mixed}. Its proof is very
similar. It is omitted here; we actually do not need this claim to prove the main result of the
paper. 
\begin{lemma}\label{lm: DTG frac to mix}
For every fractional strategy of the algorithm there is a mixed strategy incurring the same cost in expectation.
\end{lemma}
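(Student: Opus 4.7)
The plan is to construct, from any fractional strategy $x(\cdot)$, a coupled random process $X(\cdot)$ taking values in the current leaf set $\cL$ such that at every time the marginal of $X$ equals the fractional state, i.e., $\P[X \in \text{subtree}(u)] = x_u$ for every $u \in V^0$, and such that the expected cost of $X$ matches the fractional cost of $x$ term by term. Since at every moment $x$ puts total mass $1$ on the leaves, this $X$ is a valid mixed strategy, and its expected cost is the target.

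The key ingredient is the standard fact that, on an edge-weighted tree, the $\ell_1$-Wasserstein distance between two distributions supported on the leaves equals $\sum_{v \in V^0} w_v |x_v - \tilde{x}_v|$, where $x_u, \tilde{x}_u$ denote the total probability mass in the subtree rooted at $u$. Moreover, the optimal coupling is realized by a canonical edge-by-edge transportation plan, which can be upgraded to a Markov coupling. During a continuous step $j$, the polytope $K(T)$ is fixed and $x(j, \cdot)$ is absolutely continuous, so for each non-root vertex $u$ the derivative $x'_u(j,t)$ describes the instantaneous rate at which mass crosses the edge $\{u, p_u\}$. I would define $X(j, \cdot)$ as a time-inhomogeneous continuous-time jump process whose instantaneous rate of crossing edge $\{u, p_u\}$, conditioned on being on the appropriate side, is exactly $|x'_u(j,t)| / x_u(j,t)$ when mass flows out of $u$'s subtree, and analogously when mass flows in. A standard verification of the Kolmogorov forward equation along the tree then yields the marginal identity $\P[X(j,t) \in \text{subtree}(u)] = x_u(j,t)$, and integrating out we obtain expected movement cost $\int_0^\tau \sum_v w_v |x'_v(j,t)| \, dt$ and expected service cost $\int_0^\tau w'_\ell(j,t) \, x_\ell(j,t) \, dt$, matching the fractional cost.

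The discrete steps are handled separately. For a delete step, the fractional strategy first moves from $x(j,0)$ to some $x$ with $x_\ell = 0$; I would realize this in the mixed strategy by resampling the new leaf position of $X$ from the tree-metric optimal coupling with its old position, so that the expected movement cost is exactly $\sum_v w_v |x_v(j,0) - x_v|$. The subsequent projection onto $K(T(j+1,0))$ merely relabels vertices and does not alter the leaf distribution. For a fork step that replaces a leaf $\ell$ by zero-weight edges to new children $c_1, \ldots, c_m$, if $X = \ell$ I would re-sample $X$ among $\{c_1, \ldots, c_m\}$ with probabilities $x_{c_i}(j+1,0) / x_\ell(j,0)$; the marginal identity is preserved, and no cost is incurred because the new edges have weight $0$.

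The main technical obstacle lies in making the continuous-time construction rigorous when the denominators $x_u$ approach zero. On the set of times where $x_u = 0$, the process is, by the marginal identity, almost surely outside the subtree of $u$, so the jump rate becomes a $0/0$ expression that must be interpreted consistently. I would address this by the usual truncation-and-limit argument: cap all rates at a level $M$, observe that the truncated process induces a well-defined probability measure on cadlag paths on $\cL$, verify the marginal identity up to an error vanishing as $M \to \infty$ (using that the total variation of $x$ is bounded by the fractional cost, which is finite), and pass to the limit using tightness. This is the standard machinery behind converting fractional online algorithms into randomized ones on a tree, and is presumably why the authors deem the argument close enough to that of Lemma~\ref{lm: fractional to mixed} to omit it.
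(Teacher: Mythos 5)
The paper does not actually spell out a proof of Lemma~\ref{lm: DTG frac to mix}: it only remarks that the argument is ``very similar'' to that of Lemma~\ref{lm: fractional to mixed}, i.e.\ chain together, step by step, cost-optimal couplings between consecutive marginal distributions and verify by induction that the resulting Markov process has the prescribed marginals and the prescribed expected cost. Your proposal is correct in outline and rests on the same two pillars as that intended argument --- the identity between the tree Wasserstein distance and the weighted subtree-mass $\ell_1$ distance $\sum_{v}w_v|x_v-\tilde x_v|$, and a Markovian coupling that preserves the marginals $\P[X\in\mathrm{subtree}(u)]=x_u$ --- and your treatment of the delete and fork steps is exactly the discrete-coupling construction of Lemma~\ref{lm: fractional to mixed}. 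Where you genuinely diverge is in the continuous steps: you build a time-inhomogeneous jump process with explicit rates $|x_u'|/x_u$ and then need the Kolmogorov forward equation, a specification of jump destinations consistent across all edges simultaneously, and a truncation-and-tightness limit to deal with vanishing denominators. This works, but it is heavier than necessary and is the only place where your sketch leaves real work implicit (in particular, you must argue that gross edge crossings equal net flows, so that no ``churn'' inflates the movement cost above $\int\sum_u w_u|x_u'|\,dt$). The route closer to the paper's is to discretize the continuous step at times $0=t_0<t_1<\dots<t_m=\tau$, chain the optimal tree couplings between $x(t_{i-1})$ and $x(t_i)$ exactly as in Lemma~\ref{lm: fractional to mixed}, observe that the expected movement cost is $\sum_i\sum_u w_u|x_u(t_i)-x_u(t_{i-1})|$, which converges to the fractional movement cost as the mesh goes to $0$ by absolute continuity of $x(j,\cdot)$, and that the service cost matches because the marginal at the growing leaf is $x_\ell$; this sidesteps the $0/0$ rate issue and the tightness argument entirely. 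Either way the statement holds, so your proposal is acceptable; just be aware that the generator-based construction demands the extra care you flag, whereas the discretize-and-chain argument is essentially bookkeeping.
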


Our main result in this section is the following theorem. By letting $\epsilon\to 0$, it implies Theorem~\ref{thm:evolvingIntro}.
\begin{theorem}\label{thm: main}
For every $k\in\NN$ and for every $\eps > 0$ there exists a fractional strategy of the algorithm 
with the following performance guarantee.  For every pure strategy of the adversary that grows
trees of depth at most $k$, the cost $C$ of the algorithm satisfies 
$$
C\le O(k^2\log d_{\max})\cdot(\opt + \eps),
$$
where $\opt$ is the minimum distance in the final tree from the root to a leaf, and $d_{\max}$ 
is the maximum degree of a node at any point during the game.                                                                                                          
\end{theorem}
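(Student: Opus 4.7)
My plan is to adapt the mirror-descent-with-entropic-regularization paradigm of \cite{BCLLM18,BCLL19} (developed for metrical task systems) to the evolving tree setting. The fractional algorithm will follow a mirror-descent flow on the changing polytope $K(T)$ of Equation~\eqref{eq: polytope}, driven by a regularizer $\Phi$ that sums, across internal nodes $u$, the negative entropy of the conditional distribution $(x_v/x_u)_{v\in\mathrm{ch}(u)}$ with every probability floored by a small perturbation $\delta>0$ inside the $\log$, and with a control-function damping near the boundary modified from \cite{BCLL19} so that the dynamics remain well-defined when $w'_\ell$ and $x_\ell$ are simultaneously small. The perturbation $\delta$ will be chosen small enough (roughly $\eps$ divided by an exponential in $k$ and $\log d_{\max}$) that the perturbation cost telescopes to $O(\eps)$ while still satisfying $\log(1/\delta)=O(k\log d_{\max})$.

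To cope with the topology changes, which are absent in the prior work, I would maintain \emph{revised} edge weights $\tilde w_v\ge w_v$ and analyze the global potential
\[
    \Psi(x) \;=\; \Phi(x) \;+\; \beta \sum_{v\in V^0} \tilde w_v\, d_v\, x_v,
\]
where $d_v$ is the combinatorial depth of $v$ and $\beta=\Theta(k)$. The $\tilde w_v$ start equal to $w_v$ and are only ever decreased: whenever a discrete step would otherwise raise $\Psi$---most importantly the smoothing that follows a delete step, which drops the combinatorial depth of the surviving subtree---the affected $\tilde w_v$ are scaled down just enough to cancel the jump. At a fork step the algorithm places the entire split mass on a single new child (the newly created conditional distribution is then a point mass and its entropy vanishes), and because the new edges carry weight $0$ the depth term is unaffected. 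This makes $\Psi$ monotone across every discrete step at the cost of only a constant-factor inflation in the tracked service cost relative to what the adversary actually pays.

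In a continuous step, I would differentiate $\Psi$ along the mirror-descent ODE and, using its KKT conditions, derive a rate inequality that simultaneously (i) bounds the movement rate $\sum_v \tilde w_v\,|x'_v|$ by $O(k)$ times the service rate $x_\ell w'_\ell$ via the usual depth-gradient comparison (this is what $\beta=\Theta(k)$ is for), and (ii) bounds the service rate against the adversary's at the expense of $-d\Phi/dt$ and a factor $O(\log(1/\delta))=O(k\log d_{\max})$ paid for the perturbation, in the style of \cite{BCLL19}. Fixing an optimal final root-to-leaf path and comparing the algorithm to its endpoint $\ell^{*}$, integrating the rate inequality, telescoping $\Psi$, and adding the initial perturbation slack, the algorithm's total cost is bounded by $O(k^2\log d_{\max})$ times the weight of that path plus $O(\eps)$, yielding $C\le O(k^2\log d_{\max})(\opt+\eps)$.

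The main technical obstacle, and the point of novelty over \cite{BCLLM18,BCLL19}, is choreographing the downscaling of $\tilde w$ together with the conditional-entropy form of $\Phi$ so that three properties hold simultaneously at every discrete step: (i) the depth part of $\Psi$ never jumps up; (ii) the Bregman part never jumps up---for which the conditional form is essential, since at a delete/smooth the surviving conditional distribution is literally a restriction of the old one and at a fork it can be padded by concentrating all split mass on a single new child; and (iii) the inflation $\tilde w/w$ stays within a constant absorbed in the $O(\cdot)$, so the adversary's cost measured with $\tilde w$ is still $O(\opt)$. Keeping these three requirements mutually consistent, while also making the continuous-step ODE and its damping well-defined near the boundary so that the rate inequality above actually holds, is where the bulk of the technical work will live.
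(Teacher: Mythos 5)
Your outline follows the same broad route as the paper (mirror descent with an entropic regularizer on $K(T)$, revised edge weights adjusted at topology changes, a $\Theta(k)$-weighted depth term, geometrically decaying $\eps$-perturbations, final bound $O(k^2\log d_{\max})(\opt+\eps)$), but the two places where you defer ``the bulk of the technical work'' are exactly where the proof lives, and as stated they do not go through. First, the revised weights: you require both $\tilde w_v\ge w_v$ and that the $\tilde w_v$ ``start equal to $w_v$ and are only ever decreased,'' which is contradictory, and ``scale down just enough to cancel the jump'' supplies no mechanism guaranteeing your own requirement (iii) that the cumulative distortion stays within a constant factor. You need $\tilde w\ge w$ to charge the algorithm's true service and movement cost to $\tilde w$-measured quantities, and simultaneously $\tilde w\le O(1)\,w$ plus summable $\eps$-terms so that the adversary's $\tilde w$-cost is still $O(\opt+\eps)$. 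The paper resolves this tension with the explicit choice $\tilde w_u=\frac{2k-1}{2k-h_u}\bigl(w_u+\eps 2^{-j_u}\bigr)$: the factor lies in $[1,2]$ because $h_u\le k$, and the depth-weighted potential term $(2k-h_u)\tilde w_u x_u=(2k-1)(w_u+\eps 2^{-j_u})x_u$ is literally invariant under a merge while the Bregman coefficient only shrinks, so merges never raise the potential. Some concrete construction of this kind is indispensable; ``just enough'' is not an argument. Relatedly, new edges are born with $w=0$, so without an $\eps 2^{-j_u}$-type term your weight-scaled regularizer is not strongly convex at forks and the mirror-descent flow is not even well defined there; the same term is what makes the fork-step potential jumps sum to $O(\eps\, k^2\log d_{\max})$.

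Second, and more seriously, you never account for the forced evacuation at delete steps. When a leaf $\ell$ is deleted, the algorithm must move all of its mass off $\ell$, paying movement cost over distances that may be large; observing that your conditional-entropy Bregman term ``never jumps up'' once the leaf is removed says nothing about this cost. The paper handles it with the deadend step: simulate the continuous dynamics with $\tilde w_\ell\to\infty$, apply the continuous-step inequality with $y_\ell=0$ to conclude that the entire cost incurred (in particular the movement to the limiting point) is paid for by a decrease of the potential, and argue separately that $x_\ell\to 0$ so that restricting to the new polytope is legitimate. Without an argument of this type your charging scheme has a hole that the discrete-step monotonicity claims cannot patch. Two smaller issues: your weighted-depth term enters $\Psi$ with a plus sign, whereas to cancel the Lagrange-multiplier part of the movement, $\sum_u\lambda_u(x_u+\delta_u)$, it must enter the potential negatively (the paper takes $P=4kD-2\Psi$); and your central rate inequality is asserted rather than derived---the cancellation of the normal-cone terms against $y$, $x$ and $\delta$, and the avoidance of an uncontrolled $\delta_\ell w_\ell'$ movement term, are precisely what the paper's shifted regularizer $\sum_u\tilde w_u(x_u+\delta_u)\log(x_u+\delta_u)$ and damped control $f=-\frac{2x_\ell}{x_\ell+\delta_\ell}w'$ are engineered to produce, and you would need to redo that computation for your conditional-entropy $\Phi$ before the $O(k^2\log d_{\max})$ bound can be claimed.
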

Notice that for every strategy of the adversary, there exists a pure strategy that pays 
exactly $\opt$ service cost and zero movement cost. We will refer to this strategy as the {\em optimal play}. The algorithm 
cannot in general choose the optimal play because the evolution of the tree only gets revealed 
step-by-step.

\subsection{Additional notation}

Let $j$ be any step, and let $t$ be a time in that step (so, if $j$ is a discrete step then $t=0$, and
if $j$ is a continuous step of duration $\tau$ then $t\in [0,\tau]$).
For a vertex $u\in V(j,t)$ we denote by $h_u(j,t)$ its combinatorial depth, i.e., the number of edges on the 
path from $r$ to $u$ in $T(j,t)$. Instead of the actual edge weights $w_u(j,t)$, our algorithm will be based 
on revised edge weights defined as
\begin{align*}
\tilde w_u(j,t) := \frac{2k-1}{2k-h_u(j,t)}\left(w_u(j,t) + \eps 2^{-j_u} \right),
\end{align*}
where $j_u\in\mathbb N$ is the step number when $u$ was created (or $0$ if $u$ existed in the initial tree).
The purpose of the term 
$\eps2^{-j_u}$ is to ensure that $\tilde w_u(j,t)$ is strictly positive, and is essentially negligible in calculations by choosing $\eps$ very close to $0$. The multiplication by $\frac{2k-1}{2k-h_u(j,t)}$ means that the revised weights overestimate the true weights by a factor at most $2$, which decreases over time as the depth of vertices decreases due to merge steps. This decrease will counteract an otherwise adverse effect of merge steps on the ``weighted depth'' potential function that we will use in the analysis.

For $u\in V^0(j,t)$, we also define a shift parameter by induction on $h_u(j,t)$, as follows. For 
$u=c_r(j,t)$, $\delta_u(j,t) = 1$. For other $u$, $\delta_u(j,t) = \delta_{p_u}(j,t) / (d_{p_u}(j,t)-1)$, where 
$p_u = p_u(j,t)$, and $d_{p_u}(j,t)$ is the degree of $p_u$ in $T(j,t)$ (i.e., $d_{p_u}(j,t)-1$ is the 
number of children of $p_u$ in $T(j,t)$; note that every non-leaf node in $V^0(j,t)$ has degree at least $3$). 
Observe that by definition $\delta(j,t)\in K(T(j,t))$. As mentioned earlier, we often omit the parameters
$(j,t)$ from our notation, unless they are required for clarity.

\subsection{The algorithm}\label{sec:algo}

We consider four distinct types of steps: continuous steps, fork steps, deadend steps, and merge steps.
A delete step is implemented by executing a deadend step, and if needed followed by a merge step. It is 
convenient to examine the two operations required to implement a delete step separately.

\paragraph*{\bf Continuous step} 
In a continuous step, the weight $w_\ell$ of some leaf $\ell\in\cL$ grows continuously (and thus $\tilde w_\ell'> 0$). 
In this case, for $u\in V^0$ we update the fractional mass in the subtree below $u$ at rate
\begin{equation}\label{eq: dynamic}
x_u' = -\,\, \frac{2 x_u}{\tilde w_u}\tilde w_u' + \frac{x_u+\delta_u}{\tilde w_u}\left(\lambda_{p_u} - \lambda_{u}\right),
\end{equation}
where $\lambda_u=0$ for $u\in\cL$ and $\lambda_u\ge 0$ for $u\in V\setminus\cL$ are chosen such that $x$ 
remains in the polytope $K(T)$. 
We will show in Section~\ref{sec:MDExistence}  that such $\lambda$ exists (as a function of time). The first negative term in~\ref{eq: dynamic} causes a decrease of mass at the leaf $\ell$ whose edge is growing. To prevent this from reducing the mass to less than $1$, the other term involving $\lambda$ causes a rebalancing of mass, with the overall effect being that mass moves from $\ell$ to the other leaves. \footnote{The 
dynamic of $x$ corresponds to running mirror descent with regularizer 
$\Phi_t(z) = \sum_{u\in V} \tilde w_u(z_u+\delta_u)\log(z_u+\delta_u)$, using 
the growth rate $\tilde w'$ of the approximate weights as cost function, and scaling the rate of movement by a factor 
$\frac{2x_{\ell}}{x_{\ell}+\delta_{\ell}}$ when $\ell$ is the leaf whose edge grows. See Section~\ref{sec: motivation}.}

\paragraph*{\bf Fork step} 
In a fork step, new leaves $\ell_1,\ell_2,\dots,\ell_q$ (for some $q\ge 2$) are spawned as children of a leaf $u$ 
(so that $u$ is no longer a leaf). They are ``born'' with $w_{\ell_1}=w_{\ell_2}=\cdots = w_{\ell_q}=0$ and 
$x_{\ell_1}=x_{\ell_2}=\cdots =x_{\ell_q}=x_u/q$.

\paragraph*{\bf Deadend step} 
In a deadend step, we delete a leaf $\ell\ne c_r$. To achieve this, we first compute the limit of a continuous step where 
the weight $\tilde w_\ell$ grows to infinity, ensuring that the mass $x_\ell$ tends to $0$. This, of course, changes the
mass at other nodes, and we update $x$ to be the limit of this process. Then, we remove the leaf $\ell$ along with 
the edge $\{\ell,p_\ell\}$ from the tree. Notice that this changes the degree $d_{p_\ell}$. Therefore, it triggers a discontinuous 
change in the shift parameter $\delta_u$ for every vertex $u$ that is a descendant of $p_\ell$.

\paragraph*{\bf Merge step} 
A merge step immediately follows a deadend step if, after the removal of the edge $\{\ell,p_\ell\}$, the vertex $v=p_\ell$ 
has only one child $c$ left. Notice that $v\ne r$. We merge the two edges $\{c,v\}$ of weight $w_c$ and $\{v, p_v\}$ 
of weight $w_v$ into a single edge $\{c,p_v\}$ of weight $w_c+w_v$. The two edges that were merged and the vertex
$v$ are removed from $T$. This decrements by $1$ the combinatorial depth $h_u$ of every vertex $u$ in the subtree 
rooted at $c$. Thus, it triggers a discontinuous change in the revised weight $\tilde w_u$, for every vertex $u$ in this 
subtree.

\subsection{Competitive analysis}

The analysis of the algorithm is based on a potential function argument. Let $y\in K(T)$ denote the state of the
optimal play. Note that as the optimal play is a pure strategy, the vector $y$ is simply the indicator
function for the nodes on some root-to-leaf path. We define a potential function $P = P_{k,T,w}(x,y)$, 
where $x\in K(T)$ and we prove that the algorithm's cost plus the change in $P$ is at most $O(k^2\log d_{\max})$ times 
the optimal cost, where $d_{\max}$ is the maximum degree of a node of $T$. This, along with the fact that 
$P$ is bounded, implies $O(k^2\log d_{\max})$ competitiveness. In Section~\ref{sec: motivation} we motivate the construction 
of the potential function $P$. Here, we simply define it as follows:
\begin{equation}\label{eq: potential}
P := 2\sum_{u\in V^0}\tilde w_u \left(4k y_u \log \frac{1+\delta_u}{x_u+\delta_u} + (2k-h_u)x_u\right).
\end{equation}
We now consider the cost and potential change for each of the different steps separately.

\subsubsection{Continuous step}\label{sec:growth}

\paragraph*{\bf Bounding the algorithm's cost}
Let $\ell$ be the leaf whose weight $w_\ell$ is growing, and recall that $c_r$ is the current neighbor of the root $r$. By 
definition of the game, the algorithm pays two types of cost. Firstly, it pays for the mass $x_\ell$ at the leaf $\ell$ moving 
away from the root at rate $w_\ell'$. Secondly, it pays for moving mass from $\ell$ to other leaves. Notice that $x_{c_r} = 1$
stays fixed. Let $C = C(j,t)$ denote the total cost that the algorithm accumulates in
the current step $j$, up to the current time $t$.
\begin{lemma}\label{lm: growth cost}
The rate at which $C$ increases with $t$ is
$$
C' \le 3\tilde w_\ell'x_\ell + 2\sum_{u\in V^0} (x_u+\delta_u)\lambda_{u}.
$$
\end{lemma}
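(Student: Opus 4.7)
The plan is to bound the two components of $C'$ — service and movement — separately and then add them. The service-cost rate equals $w'_\ell x_\ell$, since only the growing leaf $\ell$ contributes. From the definition $\tilde w_u = \frac{2k-1}{2k-h_u}(w_u+\eps 2^{-j_u})$ and the fact that $h_u\ge 1$ for every $u\in V^0$, we have $\tilde w'_\ell = \frac{2k-1}{2k-h_\ell}w'_\ell \ge w'_\ell$, so the service cost rate is at most $\tilde w'_\ell x_\ell$.

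For the movement-cost rate $\sum_{v\in V^0}w_v|x'_v|$, since $w_v\le \tilde w_v$ for every $v$ (by the same calculation as above), it suffices to bound $\sum_{v\in V^0}\tilde w_v|x'_v|$. The crucial preliminary observation is that the constraint $x_{c_r}=1$ is invariant, so $x'_{c_r}=0$ and the $v=c_r$ term vanishes from the sum and may be ignored. For every other $v$, substituting the dynamic \eqref{eq: dynamic} into $\tilde w_v x'_v$ and applying the triangle inequality, using $\lambda_\ell=0$, yields
\[
\tilde w_v |x'_v| \;\le\; 2 x_\ell \tilde w'_\ell\,\mathbf{1}[v=\ell] + (x_v+\delta_v)(\lambda_{p_v}+\lambda_v).
\]

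Summing over $v\in V^0\setminus\{c_r\}$, I will reindex the $\lambda_{p_v}$ terms by $u=p_v$, using the identity
\[
\sum_{v:\,p_v=u}(x_v+\delta_v)=x_u+\delta_u \qquad\text{for every non-leaf }u,
\]
which combines the flow-conservation constraint defining $K(T)$ with the recursive definition $\delta_v=\delta_u/(d_u-1)$ summed over the $d_u-1$ children of $u$. The exclusion of $v=c_r$ restricts this reindexing to $u\in V^0\setminus\cL$, rather than to all non-leaves $u\in V\setminus\cL$; this is exactly what prevents a spurious root-level term $\lambda_r$ from entering the bound. Combining with the $\lambda_v$ sum and using $\lambda_v=0$ on leaves, the double sum collapses to $2\sum_{u\in V^0\setminus\cL}(x_u+\delta_u)\lambda_u$ minus the non-negative correction $2\lambda_{c_r}\mathbf{1}[c_r\in V^0\setminus\cL]$, which can be dropped for the upper bound. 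Since $\lambda_v=0$ on leaves, the remaining sum equals $2\sum_{u\in V^0}(x_u+\delta_u)\lambda_u$, and adding the service bound gives the claim.

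The main subtlety is the careful handling of the $v=c_r$ case: it is only because $x'_{c_r}=0$ that we may exclude $c_r$ from the movement sum, and this exclusion is precisely what cancels the would-be $\lambda_r$ contribution produced by a naive reindexing. Once this point is recognized, the rest is a direct application of the triangle inequality together with the flow-conservation identity $\sum_{v:\,p_v=u}(x_v+\delta_v)=x_u+\delta_u$ on the tree.
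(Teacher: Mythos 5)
Your proof is correct and follows essentially the same route as the paper's: bound $w$ by $\tilde w$, substitute the dynamic \eqref{eq: dynamic}, apply the triangle inequality (the $-2x_u\tilde w_u'$ term contributing $2\tilde w_\ell' x_\ell$ only at $u=\ell$), and collapse the $\lambda_{p_v}$-sum via the identity $\sum_{v\colon p_v=u}(x_v+\delta_v)=x_u+\delta_u$ with $\lambda=0$ on leaves. Your explicit handling of the $c_r$ exclusion (and the resulting dropped $2\lambda_{c_r}$ correction) just spells out what the paper leaves implicit in its last inequality.
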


\begin{proof}
We have
\begin{eqnarray}
         C' 
& = & w_\ell'x_\ell\, + \sum_{u\in V^0\setminus \{c_r\}} w_u |x_u'| \nonumber\\
&\le& \tilde w_\ell'x_\ell\, + \sum_{u\in V^0\setminus \{c_r\}} \tilde w_u |x_u'|\label{eq: growth w to tilde w}\\
& = & \tilde w_\ell'x_\ell\, +\!\!\!\!\! \sum_{u\in V^0\setminus \{c_r\}} \!\!\!\!\! \left|-2x_u \tilde w_u' + 
                                (x_u+\delta_u)(\lambda_{p_u} - \lambda_u)\right|\label{eq: growth movement}\\
&\le& 3\tilde w_\ell'x_\ell\, + \sum_{u\in V^0\setminus \{c_r\}} (x_u+\delta_u)(\lambda_{p_u} + \lambda_u)\label{eq: growth triangle} \\
&\le& 3\tilde w_\ell'x_\ell + 2\sum_{u\in V^0} (x_u+\delta_u)\lambda_{u}.\label{eq:growthCost}
\end{eqnarray}
Inequality~\eqref{eq: growth w to tilde w} uses the fact that $w_u\le \tilde w_u$ and $w_\ell'\le \tilde w_\ell'$. 
Equation~\eqref{eq: growth movement} uses the definition of the dynamic in Equation~\eqref{eq: dynamic}.
Inequality~\eqref{eq: growth triangle} uses the triangle inequality.
Finally, Inequality~\eqref{eq:growthCost} uses the fact that $x,\delta\in K(T)$, so 
$\sum_{v \colon p_v = u} (x_v+\delta_v) = x_u+\delta_u$ for all $u\in V\setminus\cL$.
\end{proof}

\paragraph*{\bf Change of potential}
We decompose the potential function as $P= 4kD - 2\Psi$, where
\begin{align*}
D := \sum_{u\in V^0}\tilde w_u\left(2y_u\log\frac{1+\delta_u}{x_u+\delta_u} \,\,\,+\,\,\, x_u\right)
\end{align*}
is a variant of the Bregman divergence (i.e., multiscale KL-divergence) between $y$ and $x$, and
\begin{align*}
\Psi:= \sum_{u\in V^0} h_u \tilde w_u x_u
\end{align*}
corresponds to the weighted depth potential from~\cite{BCLL19,BCLLM18}.

We first analyze the rate of change of $\Psi$, which allows to charge the algorithm's total cost to only its service cost.
\begin{lemma}\label{lm: Psi change}
The rate at which $\Psi$ changes satisfies:
$$
\Psi' \ge -k \tilde w_\ell' x_\ell + \sum_{u\in V^0} \lambda_u (x_u+\delta_u).
$$
\end{lemma}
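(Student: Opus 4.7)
The aim is to differentiate $\Psi = \sum_{u\in V^0} h_u \tilde w_u x_u$ with respect to $t$ during a continuous step. Since depths $h_u$ do not change in a continuous step and only $\tilde w_\ell$ is growing, the product rule gives
\[
\Psi' \;=\; h_\ell\,\tilde w_\ell'\, x_\ell \;+\; \sum_{u\in V^0} h_u\, \tilde w_u\, x_u'.
\]
I would then substitute the mirror descent dynamic from Equation~\eqref{eq: dynamic} into the second sum. The $\tilde w_u'$ contribution collapses to a single term $-2 h_\ell \tilde w_\ell' x_\ell$ (again because $\tilde w_u'=0$ for $u\ne \ell$), leaving
\[
\sum_{u\in V^0} h_u\, \tilde w_u\, x_u' \;=\; -2 h_\ell\, \tilde w_\ell'\, x_\ell \;+\; \sum_{u\in V^0} h_u (x_u+\delta_u)(\lambda_{p_u}-\lambda_u).
\]

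The main step is to rearrange the $\lambda$-sum by grouping according to the parent vertex $u'=p_u$. Using the polytope identity $\sum_{v:\,p_v=u'}(x_v+\delta_v)=x_{u'}+\delta_{u'}$ (which holds for all internal $u'$, including $u'=r$ with the convention $x_r+\delta_r=2$) together with $h_v=h_{u'}+1$, one obtains
\[
\sum_{u\in V^0} h_u(x_u+\delta_u)\,\lambda_{p_u} \;=\; 2\lambda_r \;+\; \sum_{u'\in V^0\setminus\cL} (h_{u'}+1)(x_{u'}+\delta_{u'})\,\lambda_{u'}.
\]
Subtracting $\sum_{u\in V^0} h_u(x_u+\delta_u)\lambda_u$, which by $\lambda_u=0$ on leaves equals $\sum_{u\in V^0\setminus\cL} h_u(x_u+\delta_u)\lambda_u$, makes the ``$+1$'' telescope out and gives
\[
\sum_{u\in V^0} h_u(x_u+\delta_u)(\lambda_{p_u}-\lambda_u) \;=\; 2\lambda_r \;+\; \sum_{u\in V^0} (x_u+\delta_u)\,\lambda_u.
\]

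Combining the pieces yields $\Psi' = -h_\ell\, \tilde w_\ell'\, x_\ell + 2\lambda_r + \sum_{u\in V^0}\lambda_u(x_u+\delta_u)$. Since the adversary is restricted to trees of depth at most $k$ we have $h_\ell\le k$, and since $r$ is not a leaf we have $\lambda_r\ge 0$; dropping the $2\lambda_r$ term and replacing $h_\ell$ by $k$ (with the correct sign) yields the claimed inequality. The only subtle point I expect is handling the root boundary correctly: $\lambda_r$ is implicitly pinned by the constraint $x_{c_r}'=0$, and one must check that the telescoping above is consistent with this (indeed the dynamic forces $\lambda_r=\tilde w_{c_r}'+\lambda_{c_r}\ge 0$, so the bound $\lambda_r\ge 0$ used at the end is justified regardless of whether $c_r$ happens to be the growing leaf $\ell$).
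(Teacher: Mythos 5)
Your proof is correct and follows essentially the same route as the paper: differentiate $\Psi$, substitute the dynamic \eqref{eq: dynamic}, regroup the $\lambda$-sum by parents using $h_v=h_{p_v}+1$ and $\sum_{v\colon p_v=u}(x_v+\delta_v)=x_u+\delta_u$, and finish with $h_\ell\le k$. The only (immaterial) difference is the bookkeeping at the top of the tree: the paper excludes $c_r$ from the sum via $x_{c_r}'=0$ and discards a term $2\lambda_{c_r}\ge 0$ in its inequality step, whereas you apply the dynamic at $c_r$ with the root multiplier and discard $2\lambda_r\ge 0$, which is consistent with the paper's stipulation that $\lambda_u\ge 0$ for all non-leaves.
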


\begin{proof}
We have
\begin{eqnarray}
\Psi' 
       & = & h_\ell \tilde w_\ell' x_\ell + \sum_{u\in V^0\setminus\{c_r\}} h_u \tilde w_u x_u'\nonumber\\
       & = & h_\ell \tilde w_\ell' x_\ell+\!\!\sum_{u\in V^0\setminus\{c_r\}} \!\!h_u \left(-2x_u \tilde w_u' + (x_u+\delta_u)(\lambda_{p_u}-\lambda_{u})\right)\nonumber\\
       & = & - h_\ell \tilde w_\ell' x_\ell + \sum_{u\in V^0\setminus\{c_r\}} h_u (x_u+\delta_u)(\lambda_{p_u}-\lambda_{u})\nonumber\\
       &\ge& - k \tilde w_\ell' x_\ell + \!\!\!
               \sum_{u\in V^0} \!\!\lambda_u \left((h_u\!+\!1)\!\!\!\!\!\!\sum_{v\colon p_v=u} \!\!\!\!\! (x_v\!+\!\delta_v) \!-\! h_u (x_u\!+\!\delta_u)\right)\label{eq: psi ineq}\\
       & = & - k \tilde w_\ell' x_\ell + \sum_{u\in V^0} \lambda_u (x_u+\delta_u).\label{eq:growthWdepth}
\end{eqnarray}
Here, Inequality~\eqref{eq: psi ineq} uses the fact that $h_\ell\le k$ and, for $u=p_v$, $h_v=h_u+1$.
Equation~\eqref{eq:growthWdepth} uses the previously noted fact that, as $x,\delta\in K(T)$, then for all $u\notin \cL$,
$\sum_{v \colon p_v = u} (x_v+\delta_v) = x_u+\delta_u$ (and if $u\in\cL$, then $\lambda_u=0$).
\end{proof}

Next, we analyze the rate of change of $D$, which allows us to charge algorithm's service cost against the service cost of the optimal play.
\begin{lemma}\label{lm: D change}
The rate at which $D$ changes satisfies:
$$
D' \le -\tilde w_\ell'x_\ell + 2(2+k\log d_{\max}) y_\ell\tilde w_\ell'.
$$
\end{lemma}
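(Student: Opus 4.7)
During a continuous step the topology is fixed, so $\delta_u$, $h_u$, and $\tilde w_u$ for $u\ne \ell$ remain constant, and only $\tilde w_\ell$ and the coordinates of $x$ vary. The plan is to differentiate $D$ term by term. The $\tilde w'_u$ contribution is nonzero only at $u=\ell$ and equals $\tilde w'_\ell\bigl(2y_\ell\log\tfrac{1+\delta_\ell}{x_\ell+\delta_\ell}+x_\ell\bigr)$, while the $x'_u$ contribution is $\sum_u \tilde w_u x'_u\bigl(1-\tfrac{2y_u}{x_u+\delta_u}\bigr)$. Into this I would substitute the mirror descent dynamic from Equation~\eqref{eq: dynamic} in the rearranged form $\tilde w_u x'_u = -2x_u\tilde w'_u + (x_u+\delta_u)(\lambda_{p_u}-\lambda_u)$.

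The $\tilde w'_u$ piece of this substitution is again nonzero only at $u=\ell$ and adds $-2x_\ell\tilde w'_\ell + \tfrac{4x_\ell y_\ell\tilde w'_\ell}{x_\ell+\delta_\ell}$, so the task reduces to showing that the Lagrange terms $\sum_u\bigl[(x_u+\delta_u)-2y_u\bigr](\lambda_{p_u}-\lambda_u)$ vanish. I would regroup them by parent: the coefficient of each $\lambda_p$ with $p\in V\setminus\cL$ is
\[
\sum_{v:\,p_v=p}\!\bigl[(x_v+\delta_v)-2y_v\bigr] \;-\; \bigl[(x_p+\delta_p)-2y_p\bigr]\cdot\mathbf{1}[p\ne r].
\]
For internal $p\ne r$ this vanishes since $x,\delta,y$ all lie in $K(T)$ and hence their coordinates satisfy the sum constraint. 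At $p=r$ the coefficient equals $(x_{c_r}+\delta_{c_r}) - 2y_{c_r} = 2-2 = 0$, because $c_r$ always lies on the optimal root-to-leaf path (so $y_{c_r}=1$) and $x_{c_r}=\delta_{c_r}=1$. Collecting the surviving terms gives
\[
D' \;=\; \tilde w'_\ell\left[-x_\ell \;+\; 2y_\ell\!\left(\log\tfrac{1+\delta_\ell}{x_\ell+\delta_\ell} + \tfrac{2x_\ell}{x_\ell+\delta_\ell}\right)\right].
\]

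It remains to bound the bracketed factor when $y_\ell=1$ (the case $y_\ell=0$ is immediate). From $x_\ell+\delta_\ell \ge \delta_\ell$ I obtain $\tfrac{2x_\ell}{x_\ell+\delta_\ell}\le 2$ and $\log\tfrac{1+\delta_\ell}{x_\ell+\delta_\ell} \le \log\tfrac{2}{\delta_\ell}$. Unfolding the recursive definition of the shift parameter yields $\delta_\ell = \prod_p (d_p-1)^{-1}$ over the $h_\ell-1$ strict ancestors of $\ell$ in $V^0$; since each such $p$ has $d_p-1\le d_{\max}-1$ and $h_\ell\le k$, we get $\log(1/\delta_\ell)\le (k-1)\log d_{\max}$, so the bracketed factor is at most $2+k\log d_{\max}$ (reading the $D$ in the statement as $d_{\max}$). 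The main obstacle is really the Lagrange telescoping step, and in particular the root boundary term that depends on $c_r$ lying on the optimal path; once that is established, the rest is routine calculus together with the recursive estimate on $\delta_\ell$.
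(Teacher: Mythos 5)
Your proposal is correct and follows essentially the same route as the paper's proof: differentiate $D$ term by term, substitute the dynamic \eqref{eq: dynamic}, telescope the Lagrange terms using $x,\delta,y\in K(T)$ together with the root identity $2y_{c_r}-x_{c_r}-\delta_{c_r}=0$ (you include the $u=c_r$ term and kill the $\lambda_r$ coefficient, the paper drops it upfront since $x'_{c_r}=0$ — a cosmetic difference), and finally bound the bracket via $\delta_\ell\ge d_{\max}^{1-k}$. Your reading of the statement's "$\log D$" as the typo for "$\log d_{\max}$" is also the intended one.
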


\begin{proof}
We have
\begin{eqnarray}
          D' 
& = & \tilde w_\ell' \left(2y_\ell\log\frac{1+\delta_\ell}{x_\ell+\delta_\ell}+x_\ell\right) + 
 \!\!\!\!\!\!\sum_{u\in V^0\setminus\{c_r\}} \!\!\!\!\!\!\tilde w_ux_u'\left(\frac{-2y_u}{x_u+\delta_u} + 1\right)\nonumber\\
& = & \tilde w_\ell' \left(2y_\ell\log\frac{1+\delta_\ell}{x_\ell+\delta_\ell}+x_\ell\right) + \nonumber \\
&    & + \!\!\!\!\!\!\!\! \sum_{u\in V^0\setminus\{c_r\}} \!\!\!\!\!\!\!\!\left(-2x_u \tilde w_u' + 
                              (x_u+\delta_u)(\lambda_{p_u} - \lambda_u)\right)\left(\frac{-2y_u}{x_u+\delta_u} + 1\right)\nonumber\\
& = & -\tilde w_\ell'x_\ell + 2y_\ell\tilde w_\ell' \left(\log\frac{1+\delta_\ell}{x_\ell+ \delta_\ell} + 
                    \frac{2 x_\ell}{x_\ell+\delta_\ell}\right) +  \nonumber \\
&    & + \sum_{u\in V^0\setminus \{c_r\}}(\lambda_{p_u}-\lambda_u)(-2y_u+x_u+\delta_u)\nonumber\\
&\le& -\tilde w_\ell'x_\ell + 2 y_\ell\tilde w_\ell'(2+k\log d_{\max}) + \nonumber \\
&    & +\sum_{u\in V^0} \lambda_u \left(2y_u-x_u-\delta_u - \!\!\!\!\sum_{v\colon p_v=u}\!\!\!\!(2y_v-x_v-\delta_v)\right)\label{eq: D ineq}\\
& = & -\tilde w_\ell'x_\ell + 2 y_\ell\tilde w_\ell'(2+k\log d_{\max}).\label{eq:growthBregman}
\end{eqnarray}
Inequality~\eqref{eq: D ineq} uses the fact that $\delta_\ell\ge (d_{\max})^{1-h_\ell}\ge (d_{\max})^{1-k}$ and 
$2y_{c_r}-x_{c_r}-\delta_{c_r}=0$. Equation~\eqref{eq:growthBregman} uses the fact that $x,y,\delta\in K(T)$, hence 
for $u\in V\setminus\cL$, 
$$
2y_u-x_u-\delta_u = \sum_{v\colon p_v=u} (2y_v-x_v-\delta_v)
$$ 
(and for $u\in\cL$, $\lambda_u=0$).
\end{proof}

We obtain the following lemma, which bounds the algorithm's cost and change in potential against the service cost of the optimal play.
\begin{lemma}\label{lm: P change}
For every $k\ge 2$, it holds that 
$$
C' + P' \le O(k^2\log d_{\max}) w_\ell' y_\ell.
$$
\end{lemma}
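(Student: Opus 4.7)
The strategy is to expand $P = 4kD - 2\Psi$, so that $C' + P' = C' + 4kD' - 2\Psi'$, and then substitute the three preceding lemmas directly. First I apply Lemma~\ref{lm: growth cost} to $C'$, Lemma~\ref{lm: D change} to $4kD'$, and Lemma~\ref{lm: Psi change} to $-2\Psi'$ (reversing the direction of the inequality on $\Psi'$). The resulting upper bound groups naturally into three kinds of terms: those proportional to $\sum_{u\in V^0}\lambda_u(x_u+\delta_u)$, those proportional to $\tilde w_\ell' x_\ell$, and the target term proportional to $y_\ell\,\tilde w_\ell'$.

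Two clean cancellations then do the work. The dual-variable sum appears as $+2\sum_u \lambda_u(x_u+\delta_u)$ from the cost bound and as $-2\sum_u \lambda_u(x_u+\delta_u)$ from $-2\Psi'$, so it vanishes exactly. The self term $\tilde w_\ell' x_\ell$ appears with coefficients $+3$ from $C'$, $+2k$ from $-2\Psi'$, and $-4k$ from $4kD'$; the total coefficient is $3-2k$, which is non-positive for $k\ge 2$, so this term can be discarded.

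What remains is $4k\cdot 2(2+k\log d_{\max})\, y_\ell\,\tilde w_\ell' = O(k^2\log d_{\max})\, y_\ell\,\tilde w_\ell'$. To finish, I convert $\tilde w_\ell'$ back to $w_\ell'$: during a continuous step neither $h_u$ nor $j_u$ changes, so differentiating the definition $\tilde w_u = \tfrac{2k-1}{2k-h_u}(w_u+\eps 2^{-j_u})$ in $t$ yields $\tilde w_\ell' = \tfrac{2k-1}{2k-h_\ell}\, w_\ell' \le 2\, w_\ell'$, where the last inequality uses the depth bound $h_\ell\le k$, so $2k-h_\ell\ge k$. This produces the claimed estimate $C' + P' \le O(k^2\log d_{\max})\, w_\ell'\, y_\ell$.

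I do not expect a real obstacle: the whole argument is one line of arithmetic on top of the three preceding lemmas. What is worth noting is that the coefficient $4k$ in the definition of $P$ is essentially the smallest multiplier of $D$ that renders the $\tilde w_\ell' x_\ell$ coefficient non-positive, and this tuning is precisely what produces one of the two factors of $k$ in the final competitive ratio; the other factor of $k$ enters through the $k\log d_{\max}$ estimate in Lemma~\ref{lm: D change}, itself a consequence of the lower bound $\delta_\ell\ge d_{\max}^{1-k}$ on the shift at a leaf.
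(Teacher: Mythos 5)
Your proof is correct and follows essentially the same route as the paper: both combine the bounds on $C'$, $\Psi'$, and $D'$, observe that the $\lambda$-terms cancel and that the coefficient $3+2k-4k$ of $\tilde w_\ell' x_\ell$ is non-positive for $k\ge 2$, and finish with $\tilde w_\ell' \le 2 w_\ell'$ (noting, as you do, that the stated $k\log D$ in Lemma~\ref{lm: D change} should read $k\log d_{\max}$).
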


\begin{proof}
Combine Equations~\eqref{eq:growthCost},~\eqref{eq:growthWdepth}, and~\eqref{eq:growthBregman}, 
and recall that $P=4kD-2\Psi$. We get
\begin{eqnarray}
C' +P' & \le & (2k+3-4k)\tilde w_\ell' x_\ell  + 8k(2+k\log d_{\max})\tilde w_\ell' y_\ell \nonumber \\
          & \le  & O(k^2\log d_{\max}) w_\ell' y_\ell,\label{eq:growthMainBound}
\end{eqnarray}
where in the last inequality we use $\tilde w_\ell' < 2w_\ell'$.
\end{proof}

\subsubsection{Fork step}
\quad
\\

Fork steps may increase the value of the potential function $P$, because the new edges
have revised weight $> 0$. The following lemma bounds this increase.
\begin{lemma}\label{lm: fork cost}
The combined contribution of all fork steps to $P$ is an increase of at most $\eps \cdot O(k^2\log d_{\max})$.
\end{lemma}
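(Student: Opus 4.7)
The plan is to exploit that a fork step only \emph{adds} new vertices $\ell_1,\dots,\ell_q$ to $V^0$ while leaving every other vertex untouched. First I would check that the formerly-leaf node $u$ keeps the same value of $\tilde w_u, h_u, x_u, y_u, \delta_u$: the definitions of $\tilde w_u, h_u, \delta_u$ involve only the path from $r$ to $u$ (and not $u$'s own degree), $x_u$ is preserved by the fork step rule, and $y_u$ stays $1$ both before and after because the optimal root-to-leaf path still passes through $u$ (it simply gains one more edge, into exactly one of the $\ell_i$). Therefore the change $\Delta P$ is composed solely of the terms newly contributed by $\ell_1,\dots,\ell_q$.

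Next I would estimate each new leaf's revised weight. Since $w_{\ell_i}=0$ and $j_{\ell_i}=j$ (the index of the current fork step), we have $\tilde w_{\ell_i} = \frac{2k-1}{2k-h_{\ell_i}}\,\eps\, 2^{-j} \le 2\eps\, 2^{-j}$, using $h_{\ell_i}\le k$. For the Bregman-like contribution at $\ell_i$, I would reuse the bound from the proof of Lemma~\ref{lm: D change}: since $\delta_{\ell_i}\ge d_{\max}^{1-h_{\ell_i}}\ge d_{\max}^{1-k}$, we get $\log\frac{1+\delta_{\ell_i}}{x_{\ell_i}+\delta_{\ell_i}} \le \log(2/\delta_{\ell_i}) = O(k\log d_{\max})$; moreover at most one $\ell_i$ lies on the optimal path, so $\sum_i y_{\ell_i}\le 1$. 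For the depth part, $(2k-h_{\ell_i})\le 2k$ and $\sum_i x_{\ell_i}=x_u\le 1$. Combining these bounds yields the per-step estimate $\Delta P \le O(k^2\log d_{\max})\cdot \eps\, 2^{-j}$.

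Finally, since distinct fork steps have distinct indices $j\in\NN$, summing the per-step bound over all forks yields a convergent geometric series
$$
\sum_{j \text{ a fork step}} O(k^2 \log d_{\max})\,\eps\, 2^{-j} \;\le\; O(k^2\log d_{\max})\,\eps \sum_{j=1}^\infty 2^{-j} \;=\; O(k^2 \log d_{\max})\,\eps,
$$
which is the claimed bound. I do not foresee a real obstacle here: the role of the exponentially decaying perturbation $\eps\, 2^{-j_u}$ in the definition of $\tilde w_u$ is precisely to make this geometric telescoping work across arbitrarily many fork steps, and the $k^2\log d_{\max}$ factor simply inherits the worst-case per-vertex potential contribution already paid in the continuous-step analysis.
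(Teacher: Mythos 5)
Your proof is correct and follows essentially the same route as the paper: bound each fork step's new-leaf contributions by $O(k^2\log d_{\max})\,\eps\,2^{-j}$ using $\tilde w_{\ell_i}\le 2\eps 2^{-j}$, $\delta_{\ell_i}=\delta_u/q\ge (d_{\max})^{1-k}$, $\sum_i y_{\ell_i}\le 1$ and $\sum_i x_{\ell_i}\le 1$, then sum the geometric series over the distinct step indices. (One small slip: $y_u$ need not be $1$ --- the optimal path may avoid $u$ entirely --- but since your argument only uses $\sum_i y_{\ell_i}\le 1$, nothing breaks.)
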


\begin{proof}
Consider a fork step that attaches new leaves $\ell_1,\dots,\ell_q$ to a leaf $u$.
The new leaves are born with revised edge weights 
$\frac{2k-1}{2k-h_u-1}\eps 2^{-j}\le \eps 2^{-j+1}$, where $j$ is the current step number. Since $\sum_{i=1}^q y_{\ell_i}=y_u\le 1$ and $\sum_{i=1}^q x_{\ell_i}=x_u\le 1$, the change $\Delta P$ in $P$ satisfies
\begin{eqnarray*}
          \Delta P 
&\le&  \eps 2^{-j+2}\cdot\left(4k\log\frac{1+\delta_u/q}{\delta_u/q} + 2k-h_{u}-1\right) \\
&\le& \eps 2^{-j+2}\cdot (2k+4k^2\log d_{\max}),
\end{eqnarray*}
where the last inequality follows from $\delta_u/q\ge (d_{\max})^{1-k}$.
As the step number $j$ is different in all fork steps, the total cost of all fork steps is
at most 
$$
\eps\cdot(2k+4k^2\log d_{\max})\sum_{j=1}^{\infty} 2^{-j+2} = \eps\cdot O(k^2\log d_{\max}),
$$
as claimed.
\end{proof}

\subsubsection{Deadend step}
\quad
\\

Recall that when a leaf $\ell$ is deleted, we first compute the limit of a continuous step as the weight 
$\tilde w_\ell$ grows to infinity. Let $\bar x$ be the mass distribution that the algorithm converges to 
when $\tilde w_\ell$ approaches infinity.
\begin{lemma}\label{lm: limit 0}
The limit $\bar x$ satisfies $\bar x_{\ell}=0$. Hence, $\bar x$ with the $\ell$-coordinate removed is a 
valid mass distribution in the new polytope $K(T)$. Also, a deadend step decreases $P$ by at least
the cost the algorithm incurs to move to $\bar x$. 
\end{lemma}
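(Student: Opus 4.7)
The plan is to interpret the deadend step as the limit of a ``virtual'' continuous step in which only $\tilde w_\ell$ is driven to infinity under the dynamic~\eqref{eq: dynamic}, followed by a discrete change that excises $\ell$ (now carrying zero mass) and updates the shift parameters at its former siblings' descendants. The pivotal observation is that the optimal play is a pure strategy based on the entire final tree, so it never occupies a leaf scheduled for deletion; hence $y_\ell=0$ throughout the deadend step. Plugging $y_\ell=0$ into Lemma~\ref{lm: P change} collapses its right-hand side to $(2k+3-4k)\tilde w_\ell' x_\ell\le 0$ for $k\ge 2$, so $C'+P'\le 0$ pointwise. Integrating along the virtual trajectory gives $C_{\mathrm v}\le P_{\mathrm{before}}-P_{\mathrm{limit}}$, where $C_{\mathrm v}$ is the cost accumulated along the trajectory and $P_{\mathrm{limit}}$ is the limiting value of the potential in the tree $T(j,0)$.

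The first claim of the lemma, $\bar x_\ell=0$, then follows from this cost bound: $C_{\mathrm v}$ is finite because $P\ge 0$, and it contains the service integral $\int_0^\infty w_\ell'(t)x_\ell(t)\,dt$ with $w_\ell'$ a positive constant; together with the structure of the ODE, this forces $x_\ell$ to tend to $0$. For the movement cost actually charged in the game, notice that during the virtual step only $\tilde w_\ell$ changes (and only upward), so $\tilde w_v(t)\ge\tilde w_v(\mathrm{start})$ for every $v$. Combining $w_v\le\tilde w_v$ with the triangle inequality,
\begin{equation*}
\sum_{v\in V^0}w_v(\mathrm{start})\,|x_v(\mathrm{start})-\bar x_v|
\;\le\;\int_0^\infty\sum_{v\in V^0}\tilde w_v(t)\,|x_v'(t)|\,dt
\;\le\; C_{\mathrm v}
\;\le\; P_{\mathrm{before}}-P_{\mathrm{limit}}.
\end{equation*}

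It remains to show $P_{\mathrm{limit}}\ge P_{\mathrm{after}}$, where $P_{\mathrm{after}}$ is the potential evaluated in the new tree with $\bar x$ restricted to $V^0\setminus\{\ell\}$ and with the updated shift parameters. Two things happen: (i) the $\ell$-summand of $P$ is removed, and (ii) $\delta_u$ jumps up by the factor $(d_{p_\ell}-1)/(d_{p_\ell}-2)$ for every descendant $u$ of $p_\ell$. For (i), the $\ell$-summand equals $2(2k-h_\ell)\tilde w_\ell x_\ell$ (since $y_\ell=0$), and the decay rate $x_\ell=O(1/\tilde w_\ell^{2})$ extracted from the ODE~\eqref{eq: dynamic} makes this summand vanish in the limit. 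For (ii), the identity $\tfrac{d}{d\delta}\log\tfrac{1+\delta}{x_u+\delta}=\tfrac{1}{1+\delta}-\tfrac{1}{x_u+\delta}\le 0$ (valid because $x_u\le 1$) shows that raising $\delta_u$ only shrinks its log term, which enters $P$ with the nonnegative coefficient $8k\tilde w_u y_u$; the weighted-depth term $(2k-h_u)\tilde w_u x_u$ is untouched.

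The main obstacle I anticipate is the rigorous treatment of the limit $\tilde w_\ell\to\infty$: one must verify that the ODE~\eqref{eq: dynamic} admits a well-defined absolutely continuous trajectory on $[0,\infty)$ whose pointwise limit $\bar x$ lies in $K(T(j,0))$, and in particular extract the quantitative decay $x_\ell=O(1/\tilde w_\ell^{2})$ rather than the weaker $x_\ell\to 0$ (the stronger bound is what is needed to kill the $\ell$-summand in the limit). This ultimately reduces to controlling the Lagrange multiplier $\lambda_{p_\ell}(t)$ as $\tilde w_\ell$ grows unboundedly.
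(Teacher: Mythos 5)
Your overall strategy coincides with the paper's: simulate a continuous step in which $\tilde w_\ell\to\infty$, use $y_\ell=0$ together with Lemma~\ref{lm: P change} to get $C'+P'\le 0$, conclude that the accumulated cost (hence the movement cost to $\bar x$, since real weights never exceed the simulated $\tilde w$) is bounded by the drop in $P$, deduce $\bar x_\ell=0$ from finiteness of the service integral, and observe that the jump in the shift parameters $\delta_u$ only decreases the logarithmic terms because $x_u\le 1$. All of that matches the paper's argument.

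The genuine gap is your treatment of the $\ell$-summand when passing from $P_{\mathrm{limit}}$ to $P_{\mathrm{after}}$. You evaluate the limiting potential with the simulated, diverging weight $\tilde w_\ell(t)$, and therefore claim to need the quantitative decay $x_\ell=O(1/\tilde w_\ell^{2})$, i.e.\ $\tilde w_\ell(t)x_\ell(t)\to 0$; but you never establish this, and you yourself note it reduces to controlling the Lagrange multiplier $\lambda_{p_\ell}(t)$, for which the dynamic~\eqref{eq: dynamic} gives no obvious bound (the term $(x_\ell+\delta_\ell)\lambda_{p_\ell}/\tilde w_\ell$ pushes mass back toward $\ell$, so the naive $1/\tilde w_\ell^{2}$ heuristic obtained by dropping it is not a proof). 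As written, the last step of your argument is therefore unproven. The gap is also unnecessary: since $y_\ell=0$, the $\ell$-summand of $P$ along the simulation is $2(2k-h_\ell)\tilde w_\ell(t)x_\ell(t)\ge 0$, so when lower-bounding $\lim_t P(t)$ by the post-deletion potential you may simply discard it — the remaining summands involve weights that do not change during the simulation and converge to their values at $\bar x$. Equivalently, as the paper notes, the growth of $\tilde w_\ell$ is only a simulation: the potential that enters the telescoping across steps is evaluated with the true (finite) revised weights, and at $\bar x$ its $\ell$-term is exactly $0$ because $\bar x_\ell=y_\ell=0$. With that observation in place of the decay-rate claim, the rest of your argument goes through and is the paper's proof.
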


\begin{proof}
Note that $y_\ell=0$ for the ``dying'' leaf $\ell$. Thus, by Lemma~\ref{lm: P change}, the cost of the algorithm during
the growth of $\tilde w_\ell$ is bounded by the decrease of $P$ during that time. Clearly, $P$ can only decrease by 
a finite amount (as it remains non-negative) and thus the algorithm's cost is finitely bounded. But this means that the 
mass at $\ell$ must tend to $0$, since otherwise the service cost would be infinite. Moreover, notice that the growth
of $\tilde w_\ell$ is just a simulation and the algorithm doesn't pay the service cost, only the cost of moving from its
state $x$ at the start of the simulation to the limit state $\bar x$. However, this movement cost is at most the total
cost to the algorithm during the simulation, and $P$ decreases by at least the total cost. Finally, at $\bar x$, the term 
in $P$ for $\ell$ equals $0$, so removing it does not increase $P$. Also, for every vertex $u$ in a subtree rooted at a 
sibling of $\ell$ the term $\delta_u$ increases (as the degree $d_{p_\ell}$ decreases by $1$). However, this too cannot increase 
$P$ (as $x_u\le 1$).
\end{proof}

\subsubsection{Merge step}

\begin{lemma}\label{lm: merge}
A merge step does not increase $P$.
\end{lemma}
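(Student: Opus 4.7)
The plan is to account for the change in $P$ term by term, exploiting the fact that most relevant quantities at vertices in the subtree rooted at the surviving child $c$ are invariant under the merge. First I would catalogue what the merge changes: (i) the shift parameters $\delta_u$ are unaltered for every $u$, because deleting $v$ from the path to the root only removes a factor $d_v - 1 = 1$ from the multiplicative product defining $\delta_u$; (ii) the polytope constraints force $x_v = x_c$ and $y_v = y_c$, and by (i), $\delta_v = \delta_c$; (iii) the only actual edge weight that changes is $w_c$, which jumps to $w_c + w_v$; and (iv) the combinatorial depth $h_u$ decreases by $1$ for every $u$ in the subtree rooted at $c$.

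Having tabulated these, I would rewrite the contribution of a vertex $u$ to $P$ in the compact form
\[
2(2k-1)(w_u + \eps 2^{-j_u})\!\left(\frac{4k\, y_u L_u}{2k - h_u} + x_u\right), \qquad L_u := \log\frac{1+\delta_u}{x_u+\delta_u}\ge 0,
\]
where $L_u \ge 0$ follows from $x_u \le 1$. For every \emph{strict} descendant $u$ of $c$, all of $w_u + \eps 2^{-j_u}$, $x_u$, $y_u$, $\delta_u$, $L_u$ are unchanged, and only $2k - h_u$ grows by one; so the summand at $u$ can only decrease. The entire proof therefore reduces to showing that the combined pre-merge summands at $v$ and $c$ dominate the single post-merge summand at $c$.

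For this $v,c$ comparison I would set $B := 4k y_v L_v + (2k - h_v) x_v$. Using $y_v = y_c$, $x_v = x_c$, $\delta_v = \delta_c$, the post-merge summand at $c$ equals $2 \tilde w_c^{\mathrm{after}} B$ while the pre-merge summands read $2\tilde w_v B + 2\tilde w_c^{\mathrm{before}}(B - x_v)$. Hence
\[
\Delta \;=\; 2 B\,\bigl(\tilde w_c^{\mathrm{after}} - \tilde w_v - \tilde w_c^{\mathrm{before}}\bigr) + 2\tilde w_c^{\mathrm{before}}\, x_v.
\]
A direct computation from the definition of $\tilde w_u$, invoking $j_c > j_v$ (since $c$ was created in a fork step strictly later than $v$), then establishes the key inequality
\[
\tilde w_v + \tilde w_c^{\mathrm{before}} - \tilde w_c^{\mathrm{after}} \;\ge\; \frac{\tilde w_c^{\mathrm{before}}}{2k - h_v},
\]
and combining this with the trivial bound $B \ge (2k - h_v) x_v$ yields $\Delta \le 0$. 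Together with the non-positive contributions from the strict descendants of $c$, this proves that $P$ does not increase.

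The main obstacle is the weight algebra behind the displayed inequality: this is precisely the reason for the $\frac{2k-1}{2k - h_u}$ scaling in the revised weights, which is calibrated so that the drop in the scaling factor when $h_u$ decreases absorbs both the jump $w_c \mapsto w_c + w_v$ at the merged edge and the residual $2\tilde w_c^{\mathrm{before}} x_v$ term. Everything else is routine bookkeeping.
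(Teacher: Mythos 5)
Your proof is correct and takes essentially the same route as the paper: the same rewriting of $P$ that isolates the $h_u$-dependence in the coefficient of the $\log$ term, the same invariances ($\delta$ unchanged, $x_v=x_c$, $y_v=y_c$), and the same calibration of the $\frac{2k-1}{2k-h_u}$ factor, with strict descendants of $c$ handled identically. The only difference is bookkeeping at the merged pair: the paper decreases the depths first and then observes that merging the two summands is an exact identity losing only the $\eps 2^{-j_v}$ term, whereas you fold the depth change at $c$ into the comparison and absorb the residual $2\tilde w_c^{\mathrm{before}}x_v$ via your key inequality---which indeed holds, and in fact does not require $j_c>j_v$, since nonnegativity of $\eps 2^{-j_v}$ already suffices.
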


\begin{proof}
Let $j$ be the step number in which the merge happens.
Substituting the expression for the revised weights, the potential $P$ can be written as
$$
P = 2\!\!\!\sum_{u\in V^0}\!\!\!\left(w_u + \eps 2^{-j_u} \right) \left(\frac{2k-1}{2k-h_u}4k y_u \log \frac{1+\delta_u}{x_u+\delta_u} + (2k-1)x_u\right).
$$
Consider the two edges $\{c,v\}$ and $\{v, p_v\}$ that are to be merged, where $v=p_c(j,0)$. Firstly, 
for each vertex $u$ in the subtree of $c$ (including $c$ itself), its depth $h_u$ decreases by $1$.
This cannot increase $P$. Notice also that as $d_v = 2$, we have $\delta_c = \delta_v$ and the merge
does not change any $\delta_u$. The new value $h_c(j+1,0)$ equals the old value $h_v(j,0)$. Note 
also that  $y_c=y_v$ and $x_c=x_v$ because $c$ is the only child of $v$. Thus, merging the two 
edges of lengths $w_c$ and $w_v$ into a single edge of length $w_c+w_v$, and removing vertex 
$v$, only leads to a further decrease in $P$ resulting from the disappearance of the $2^{-j_v}$ term.
\end{proof}

\subsubsection{Putting it together}
\quad

\begin{proofof}{Theorem~\ref{thm: main}}
By Lemmas~\ref{lm: P change},~\ref{lm: fork cost},~\ref{lm: limit 0} and~\ref{lm: merge},
$$
C\le O(k^2\log d_{\max}) \opt + P_0 - P_f + \eps\cdot O(k^2\log d_{\max}),
$$
where $P_0$ and $P_f$ are the initial and final value of $P$, respectively. Now, observe that 
$P_0 = \eps\cdot O(k)$ and $P_f\ge 0$.
\end{proofof}

\section{Derivation of Algorithm and Potential Function}\label{sec: motivation}

We now describe how we derived the algorithm and potential function from the last section, and justify the existence of $\lambda$.

\subsection{Online mirror descent}
Our algorithm is based on the online mirror descent framework of \cite{BCLLM18,BCLL19}. In general, an algorithm in this framework is specified by a convex body $K\subset \R^n$, a suitable strongly convex function $\Phi\colon K\to \R$ (called \emph{regularizer}) and a map $f\colon [0,\infty)\times K\to \R^n$ (called \emph{control function}). The algorithm corresponding to $K$, $\Phi$ and $f$ is the (usually unique) solution $x\colon[0,\infty)\to K$ to the following differential inclusion:
\begin{align}
	\nabla^2\Phi(x(t))\cdot x'(t) \in f(t,x(t)) - N_K(x(t)),\label{eq:MD}
\end{align}
where $\nabla^2\Phi(x)$ denotes the Hessian of $\Phi$ at $x$,
\[N_K(x):=\{\mu\in\R^n\colon \langle \mu, y-x\rangle \le 0, \,\forall y\in K\}\]
is the normal cone of $K$ at $x(t)$, and the right-hand side in~\ref{eq:MD} indicates the
set $\left\{f(t,x(t))-\mu\colon \mu\in N_K(x(t))\right\}$.
Intuitively, \eqref{eq:MD} means that $x$ tries to move in direction $f(t,x(t))$, with the normal cone term $N_K(x(t))$ ensuring that $x(t)\in K$ can be maintained, and multiplication by the positive definite matrix $\nabla^2\Phi(x(t))$ corresponding to a distortion of the direction in which $x$ is moving. This corresponds to the continuous-time version of mirror descent. The left hand side of \eqref{eq:MD} is nothing but $\frac{d\nabla\Phi(x(t))}{dt}$.

A benefit of the online mirror descent framework is that there exists a default potential function for its analysis, namely the Bregman divergence associated to $\Phi$, defined as
\begin{align*}
	D_\Phi(y\| x):=\Phi(y)-\Phi(x)+\langle \nabla \Phi(x), x-y\rangle
\end{align*}
for $x,y\in K$. Plugging in $x=x(t)$, the change of the Bregman divergence as a function of time is
\begin{align}
	\frac{d}{dt}D_\Phi(y\| x(t)) 
	&= \langle \nabla^2\Phi(x(t))\cdot x'(t), x(t)-y\rangle\label{eq:chainRule}\\
&= \langle f(t,x(t))-\mu(t), x(t)-y\rangle\nonumber \\
& \qquad\qquad\text{for some $\mu(t)\in N_K(x(t))$}\label{eq:plugMD}\\
&\le \langle f(t,x(t)), x(t)-y\rangle,\label{eq:BregmanBound}
\end{align}
where equation \eqref{eq:chainRule} follows from the definition of $D_\Phi$ and the chain rule, \eqref{eq:plugMD} follows from~\eqref{eq:MD}, 
and~\eqref{eq:BregmanBound} follows from the definition of $N_K(x(t))$.

\subsection{Charging service cost for evolving trees}

In the evolving tree game, we have $K=K(T)$. For a continuous step, it would seem natural to choose $f(t,x(t))=-w'(t)$, so that \eqref{eq:BregmanBound} implies that the online service cost $\langle w'(t), x(t)\rangle$ plus change in the potential $D_\Phi(y\| x(t))$ is at most the offline service cost $\langle w'(t), y\rangle$. For the regularizer $\Phi$ (which should be chosen in a way that also allows to bound the movement cost later), the choice analogous to \cite{BCLL19} would be
\begin{align*}
\Phi(x):= \sum_{u\in V^0} w_u(x_u+\delta_u)\log(x_u+\delta_u).
\end{align*}

However, since $\Phi$ (and thus $D_\Phi$) depends on $w$, the evolution of $w$ leads to an additional change of $D_\Phi$, which the bound~\eqref{eq:BregmanBound} does not account for as it assumes the regularizer $\Phi$ to be fixed. To determine this additional change, first observe that by a simple calculation
\begin{align*}
	D_\Phi(y\|x) = \sum_{u\in V^0} w_u\left[(y_u+\delta_u)\log \frac{y_u+\delta_u}{x_u+\delta_u} + x_u-y_u\right].
\end{align*}
When $w_\ell$ increases at rate $1$, this potential increases at rate 
$$
(y_\ell+\delta_\ell)\log \frac{y_\ell+\delta_\ell}{x_\ell+\delta_\ell} + x_\ell-y_\ell.
$$
The good news is that the part 
$$
y_\ell\log \frac{y_\ell+\delta_\ell}{x_\ell+\delta_\ell}\le y_\ell \cdot O\left(\log\frac{1}{\delta_\ell}\right)\le O(k) y_\ell
$$ 
can be charged to the offline service cost, which increases at rate $y_\ell$. The term $-y_\ell$ also does no harm as it is non-positive. The term $x_\ell$ might seem to be a slight worry, because it is equal to the online service cost, which is precisely the quantity that the change in potential is supposed to cancel. It means that effectively we have to cancel two times the online service cost, which can be achieved by accelerating the movement of the algorithm by a factor $2$ (by multiplying the control function by a factor $2$). The main worry is the remaining term $\delta_\ell\log \frac{y_\ell+\delta_\ell}{x_\ell+\delta_\ell}$, which does not seem controllable. We would therefore prefer to have a potential that has the $\delta_u$ terms only inside but not outside the $\log$.

Removing this term (and, for simplicity, omitting the $y_u$ at the end of the potential, which does not play any important role), our desired potential would then be a sum of the following two terms $L(t)$ and $M(t)$:
\begin{align*}
	L(t) &:= \sum_{u\in V^0} w_u(t) y_u\log \frac{y_u+\delta_u}{x_u(t)+\delta_u}\\
	M(t) &:= \sum_{u\in V^0} w_u(t) x_u(t)
\end{align*}

Let us study again why these terms are useful as part of the classical Bregman divergence potential by calculating their change. Dropping $t$ from the notation, and using that $\nabla^2\Phi(x)$ is the diagonal matrix with entries $\frac{w_u}{x_u+\delta_u}$, we have
\begin{align*}
	L' &= \langle w',y\rangle O(k) - \langle y, \nabla^2\Phi(x)\cdot x'\rangle \\
	&= \langle w',y\rangle O(k) - \langle y, f-\mu\rangle 
\end{align*}
and
\begin{align*}
	M' &= \langle w', x\rangle + \langle w, x'\rangle\\
	&= \langle w', x\rangle + \langle x+\delta, \nabla^2\Phi(x) \cdot x'\rangle\\
	&= \langle w', x\rangle + \langle x+\delta,  f-\mu\rangle
\end{align*}
for some $\mu\in N_K(x)$.

For a convex body $K$ of the form $K=\{x\in\R^n\colon Ax\le b\}$ where $A\in\R^{m\times n}$, $b\in \R^n$, the normal cone is given by
\begin{align*}
N_K(x) = \{A^T\lambda\mid \lambda\in\R_+^m, \langle \lambda, Ax-b\rangle=0\}.\label{eq:normalConeGeneral}
\end{align*}
The entries of $\lambda$ are called \emph{Lagrange multipliers}. In our case, we will have $x_u>0$ for all $u\in V^0$, so the Lagrange multipliers corresponding to the constraints $x_u\ge 0$ will be zero. So the only tight constraints are the equality constraints, and the normal cone corresponding to \emph{any} such $x$ is given by
\begin{align}
	N_K(x) = \{(\lambda_{u}-\lambda_{p_u})_{u\in V^0}\mid \lambda\in\R^{V}, \lambda_u=0\,\forall u\in\cL\}.\label{eq:normalCone}
\end{align}
Since $\delta\in K$ and $\delta_u>0$ for all $u$, we thus have $N_K(x)=N_K(\delta)$. Hence, we can cancel the $\mu$ terms in $L'$ and $M'$ by taking the potential $D=2L+M$, so that
\begin{align*}
	D'=2L' + M' &= \langle w',y\rangle O(k) + \langle w', x\rangle + \langle x+\delta - 2y,  f-\mu\rangle\\
	&\le \langle w',y\rangle O(k) + \langle w', x\rangle + \langle x+\delta - 2y,  f\rangle,
\end{align*}
where the inequality uses that $\mu\in N_K(x)$ and $\mu\in N_K(\delta)$. Recalling that $w_u'=\1_{u=\ell}$, and choosing $f=-2w'$ as the control function, we get
\begin{align} \label{eq:justforbelow}
	D' &\le y_\ell O(k) - x_\ell,
\end{align}
i.e., the potential charges the online service cost to $O(k)$ times the offline service cost. Indeed, the potential function $D$ used in Section~\ref{sec:growth} is given by $D=2L+M$, up to the replacement of $w$ by $\tilde w$. Moreover we note that \eqref{eq:justforbelow} remains true with the control function $f=-\frac{2x_\ell}{x_\ell+\delta_\ell}w'$, which will be helpful for the movement as we discuss next.

\subsection{Bounding movement via damped control and revised weights}

Besides the service cost, we also need to bound the movement cost. By \eqref{eq:MD} and \eqref{eq:normalCone} and since $\nabla^2\Phi(x)$ is the diagonal matrix with entries $\frac{w_u}{x_u+\delta_u}$, the movement of the algorithm satisfies
\begin{align}
	w_u x_u'&=(x_u+\delta_u)(f_u + \lambda_{p_u} - \lambda_{u})\nonumber\\
	&= -2x_u w_u' + (x_u+\delta_u)(\lambda_{p_u} - \lambda_{u}),\label{eq:boundMovementMotivation}
\end{align}
where the last equation uses $f=-\frac{2x_\ell}{x_\ell+\delta_\ell}w'$. Up to the discrepancy between $w$ and $\tilde w$, this is precisely Equation~\eqref{eq: dynamic}. Here, damping the control function $f$ by the factor $\frac{x_\ell}{x_\ell+\delta_\ell}$ is crucial: Otherwise there would be additional movement of the form $\delta_\ell w_\ell'$. Although a similar $\delta$-induced term in the movement exists also in~\cite{BCLL19}, the argument in \cite{BCLL19} to control such a term relies on $w$ being fixed and would therefore fail in our case. Scaling by $\frac{x_\ell}{x_\ell+\delta_\ell}$ prevents such movement from occurring in the first place.

To bound the movement cost,~\cite{BCLL19} employs a \emph{weighted depth potential} defined as
\begin{align*}
	\Psi = \sum_{u\in V^0}h_u w_u x_u.
\end{align*}
Our calculation in Lemma~\ref{lm: Psi change} suggests that we can use the same $\Psi$ here, choosing the overall potential function as $P= 4kD - 2\Psi$. But now the problem is that the combinatorial depths $h_u$ can change during merge steps, which would lead to an increase of the overall potential $P$. To counteract this, we use the revised weights $\tilde w_u$: The scaling by $\frac{2k-1}{2k-h_u}$ in their definition means that $\tilde w_u$ slightly increases whenever $h_u$ decreases, and overall this ensures that the potential $P$ does not increase in a merge step. Since $\frac{2k-1}{2k-h_u}\in[1,2]$, such scaling loses only a constant factor in the competitive ratio. The additional term $2^{-j_u}$ in the definition of the revised weights only serves to ensure that $\tilde w_u>0$, so that $\Phi$ is strongly convex as required by the mirror descent framework.

\subsection{Existence of the mirror descent path for time-varying $\Phi_t$}\label{sec:MDExistence}

The statement of the following lemma was proved in the full version of~\cite{BCLLM18}:
\begin{lemma}[Bubeck et al.~{\cite[Theorem~5.7 \& Lemma 5.9]{BCLLM18}}]\label{lem:existStatic}
	Let $C\subset\R^n$ be a compact, convex set, $H\colon C\to\{A\in\R^{n\times n}\mid A\succ 0\}$ continuous, $g\colon[0,\tau]\times C\to\R^n$ continuous and $y_0\in C$. Then there is an absolutely continuous solution $y\colon[0,\tau]\to C$ satisfying
	\begin{align*}
		y'(t)&\in H(y(t))\cdot(g(t,y(t))-N_C(y(t)))\\
		y(0)&=y_0.
	\end{align*}
	Moreover, the solution is unique provided $H$ is Lipschitz and $g$ locally Lipschitz.
\end{lemma}

The existence of our algorithm is justified by the following theorem, which generalizes~\cite[Theorem 2.2]{BCLLM18} to the setting where a fixed regularizer $\Phi$ is replaced by a time-varying regularizer $\Phi_t$.

\begin{theorem}\label{thm: existence of x}
	Let $K\subset \R^n$ be compact and convex, $f\colon[0,\infty)\times K\to \R^n$ continuous, $\Phi_t\colon K\to\R$ strongly convex for $t\ge 0$ and such that $(x,t)\mapsto \nabla^2\Phi_t(x)^{-1}$ is continuous. Then for any $x_0\in K$ there is an absolutely continuous solution $x\colon[0,\infty)\to K$ satisfying
	\begin{align*}
		\nabla^2\Phi_t(x(t))\cdot x'(t)&\in f(t,x(t))-N_K(x(t))\\
		x(0)&=x_0.
	\end{align*}
	If $(x,t)\mapsto\nabla^2\Phi_t(x)^{-1}$ is Lipshitz and $f$ locally Lipschitz, then the solution is unique.
\end{theorem}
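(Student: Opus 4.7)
The plan is to reduce the statement to the time-invariant case \cite[Theorem 2.2]{BCLLM18} via a discretization in which the regularizer $\Phi_t$ is frozen on short sub-intervals, then pass to the limit using compactness and the closed graph of $N_K$. Concretely, fix an arbitrary horizon $T>0$ and step size $\eta>0$, and build $x^\eta\colon[0,T]\to K$ piecewise: on each interval $[n\eta,(n+1)\eta]$, apply the time-invariant theorem with the fixed regularizer $\Phi_{n\eta}$ and the (time-varying) control $f$, using as initial condition the endpoint inherited from the previous piece. This produces a well-defined absolutely continuous curve $x^\eta$ on $[0,T]$.

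The continuity of $(x,t)\mapsto \nabla^2\Phi_t(x)^{-1}$ on the compact set $K\times[0,T]$ yields a uniform operator-norm bound on this inverse Hessian, which combined with boundedness of $f$ gives a uniform $L^\infty$ bound on $\|(x^\eta)'\|$, hence a uniform Lipschitz bound on the family $\{x^\eta\}$. Arzel\`a--Ascoli then produces a subsequence $x^{\eta_m}\to x$ uniformly with $(x^{\eta_m})'\rightharpoonup x'$ weakly in $L^2([0,T];\R^n)$. Writing $\nu^\eta(t):= f(t,x^\eta(t)) - \nabla^2\Phi_{n\eta}(x^\eta(t))(x^\eta)'(t)\in N_K(x^\eta(t))$, I pass $\eta_m\to 0$ using the continuity of $(x,t)\mapsto \nabla^2\Phi_t(x)^{-1}$ to replace the frozen Hessian by its true value, and the closed graph of the normal-cone correspondence (combined with Mazur's lemma to upgrade weak convergence of $\nu^{\eta_m}$ to a.e.\ convergence of convex combinations) to conclude that the limit selection satisfies $\nu(t)\in N_K(x(t))$ a.e. A standard diagonal argument in $T\to\infty$ extends the solution to $[0,\infty)$.

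For uniqueness under the Lipschitz hypotheses, given two solutions $x_1,x_2$ I would use the time-varying Bregman divergence $\Psi(t):=D_{\Phi_t}(x_1(t)\|x_2(t))$ as a Lyapunov function. Differentiating $\Psi$, the contributions of the two normal-cone selections cancel by monotonicity of $N_K$, the contribution of $f$ is bounded by $L_f\cdot \Psi$ using the Lipschitz property of $f$ together with strong convexity of $\Phi_t$, and the explicit $\partial_t\Phi_t$ contribution is controlled similarly via the Lipschitz property of the inverse Hessian and the uniform bound on the $x_i'$. This yields $\Psi'(t)\le C\Psi(t)$ for a constant $C$ independent of the solutions; since $\Psi(0)=0$, Gr\"onwall forces $\Psi\equiv 0$ and therefore $x_1\equiv x_2$.

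The main obstacle is the limit step
\[
\nabla^2\Phi_{n\eta_m}(x^{\eta_m}(t))\,(x^{\eta_m})'(t)\;\rightharpoonup\;\nabla^2\Phi_t(x(t))\,x'(t),
\]
which requires combining uniform convergence of $x^{\eta_m}$ with only weak convergence of $(x^{\eta_m})'$ in the presence of a time-varying coefficient matrix. The reason it goes through is that, on the compact set $K\times[0,T]$, the continuity hypothesis promotes to uniform continuity, so the coefficient matrix $\nabla^2\Phi_{n\eta_m}(x^{\eta_m}(t))$ converges strongly (uniformly in $t$) to $\nabla^2\Phi_t(x(t))$ along the sequence; strong times weak convergence then suffices to identify the limit of the product, and the remaining compatibility between this weak limit and the normal-cone inclusion is exactly what Mazur's lemma plus the closed graph of $N_K$ deliver.
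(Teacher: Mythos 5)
Your route is genuinely different from the paper's. The paper does not redo any limit argument: it appends time as an extra state coordinate moving at unit speed, applies the existing state-dependent result \cite[Theorem~5.7]{BCLLM18} on the product body $C=[-1,\tau+1]\times K$ with the block-diagonal matrix $\mathrm{diag}(1,\nabla^2\Phi_t(x)^{-1})$ and control $(1,f)$, notes that $N_C=\{0\}\times N_K$ while the time coordinate lies in $[0,\tau]$, and inherits uniqueness from \cite[Theorem~5.9]{BCLLM18}. Your freeze-and-pass-to-the-limit scheme can in principle also work, but as written it has a genuine gap exactly where you need quantitative control: you assert a uniform $L^\infty$ bound on $(x^\eta)'$ from boundedness of $\nabla^2\Phi_t(x)^{-1}$ and of $f$ alone. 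The inclusion only gives $\nabla^2\Phi_{n\eta}(x^\eta)(x^\eta)'=f-\mu^\eta$ with $\mu^\eta(t)\in N_K(x^\eta(t))$, and the normal-cone selection is not a priori bounded (nor does the cited time-invariant theorem state any derivative bound), so neither the Arzel\`a--Ascoli step nor the weak $L^2$ convergence of $\nu^\eta$ is justified as stated. The standard repair must be supplied: at a.e.\ $t$ both $\pm(x^\eta)'(t)$ lie in the tangent cone of $K$ at $x^\eta(t)$, hence $\langle\mu^\eta(t),(x^\eta)'(t)\rangle=0$, and a uniform bound $\nabla^2\Phi_t(x)\succeq\alpha I$ on $K\times[0,T]$ (which follows from continuity of the inverse Hessian on this compact set) gives $\|(x^\eta)'\|\le\|f\|_\infty/\alpha$; boundedness and uniform continuity of the Hessian itself --- which you use for the ``strong times weak'' step but which is not a hypothesis --- must be derived from continuity of matrix inversion at positive definite matrices, and it is what bounds $\nu^\eta$ in $L^\infty$. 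With these bounds, the identification of the limit selection is best done by testing $\int\phi(t)\langle\nu^{\eta_m}(t),y-x^{\eta_m}(t)\rangle\,dt\le 0$ for fixed $y\in K$ and $\phi\ge 0$, since Mazur combinations mix normal cones at different base points and need the uniform convergence $x^{\eta_m}\to x$ plus the $L^\infty$ bound to be controlled.

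A second, smaller gap is in your uniqueness argument: nothing is assumed about $\partial_t\Phi_t$ or $\partial_t\nabla\Phi_t$, so ``the explicit $\partial_t\Phi_t$ contribution'' to $\frac{d}{dt}D_{\Phi_t}(x_1\|x_2)$ is not controlled by Lipschitzness of the inverse Hessian as stated. You need the observation that the Bregman divergence depends on $\Phi_t$ only through its Hessian, e.g.
\begin{align*}
D_{\Phi_t}(y\|x)=\int_0^1(1-s)\,\bigl\langle \nabla^2\Phi_t\bigl(x+s(y-x)\bigr)(y-x),\,y-x\bigr\rangle\,ds,
\end{align*}
so that Lipschitzness in $t$ of the Hessian (again derived from that of its inverse together with the bounds above) yields an explicit-time term of order $\|x_1-x_2\|^2$, which strong convexity converts into $O(\Psi)$; the cross terms involving the normal-cone selections also produce $O(\|x_1-x_2\|^2)$ errors whose control requires the $L^\infty$ bounds on $x_i'$ and $\mu_i$ from the first paragraph. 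With these repairs your argument goes through, but it is substantially heavier than the paper's reduction, which obtains both existence and uniqueness directly from the earlier theorems.
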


\begin{proof}
	It suffices to consider a finite time interval $[0,\tau]$. We invoke Lemma~\ref{lem:existStatic} with $C=[-1,\tau+1]\times K$,
	\begin{align*}
		H(t,x)= \begin{bmatrix}
			1& 0 \\
			0 & \nabla^2\Phi_t(x))^{-1}
		\end{bmatrix},
	\end{align*}
	$g(t,(s,x))=(1,f(t,x))$ and $y_0=(0,x_0)$. Decomposing the solution as $y(t)=(s(t),x(t))$ for $s(t)\in [-1,\tau+1]$ and $x(t)\in K$, and noting that for $s(t)\in[0,\tau]$ we have $N_C(y(t))= \{0\}\times N_K(x(t))$, we get
	\begin{align*}
		s(t)&=t\\
		x'(t)&\in \nabla^2\Phi_t(x(t)))^{-1}\cdot(f(t,x(t))-N_K(x(t)))\\
		x(0)&=x_0.
	\end{align*}
	The result follows from Lemma~\ref{lem:existStatic}.
\end{proof}
For every continuous step, 
$$
\Phi_t=\sum_{u\in V^0}\tilde w_u(t)(x_u+\delta_u)\log (x_u+\delta_u)
$$ 
and 
$$
f(t,x)=-\frac{2x_\ell(t)}{x_\ell(t)+\delta_\ell} w'(t)
$$ 
satisfy the assumptions of the theorem. By the calculation in~\eqref{eq:boundMovementMotivation} (with $w$ replaced by $\tilde w$), the corresponding well-defined algorithm is the one from equation \eqref{eq: dynamic}. Note that Lagrange multipliers for the constraints $x_u\ge 0$ are indeed not needed (see below).

\paragraph*{\bf Sign of Lagrange multipliers} 
We stipulated in Section~\ref{sec:algo} that $\lambda_u\ge 0$ for $u\in V\setminus\cL$, and we do not have any Lagrange multipliers for the constraints $x_u\ge 0$. To see this, it suffices to show that $\lambda_u\ge 0$ for $u\in V\setminus\cL$ in the case that the constraints $x_u\ge 0$ are removed from $K$: If this is true, then \eqref{eq: dynamic} shows for any leaf $u\in\cL$ that $x_u'<0$ is possible only if $x_u>0$ (since $\lambda_u=0$ when $u$ is a leaf, recalling~\eqref{eq:normalCone}). Hence, $x_u\ge 0$ holds automatically for any leaf $u$, and thus also for internal vertices $u$ due to the constraints of the polytope. Consequently, we do not need Lagrange multipliers for constraints $x_u\ge 0$.

To see that $\lambda_u\ge 0$ for $u\in V\setminus\cL$, note that if we replace in $K$ the constraints $\sum_{v \colon p_v = u} x_v = x_u$ by $\sum_{v \colon p_v = u} x_v \ge x_u$, then this directly gives $\lambda_u\ge 0$ in~\eqref{eq:normalCone}. We will show (analogously to~\cite[Lemma~3.2]{BCLL19}) that these inequality constraints will actually still be satisfied with equality: Indeed, suppose $\sum_{v \colon p_v = u} x_v(t) > x_u(t)$ for some $u\in V\setminus\cL$ at some time $t$; then $\lambda_u(t)=0$ by the complementary slackness condition in~\eqref{eq:normalConeGeneral}. But then the dynamics \eqref{eq: dynamic} yields $x_u'(t)\ge 0$ (since $\tilde w_u'(t)=0$ as $u\notin\cL$) and $x_v'(t)\le 0$ for all $v$ with $p_v=u$. Thus, $\sum_{v \colon p_v = u} x_v(t) > x_u(t)$ is impossible.

\section{Reductions and Applications}\label{sec: applications}

In this section we show that layered graph traversal and small set chasing (a.k.a.
metrical service systems) reduce to the evolving tree game, thereby obtaining Theorems~\ref{thm:reductionShort} and~\ref{thm:mainIntro}.

\subsection{Layered graph traversal}

Recall the definition of the problem in the introduction. We will introduce useful
notation. Let 
$$
V_0 = \{a\}, V_1, V_2, \dots, V_n = \{b\}
$$ 
denote the layers of the
input graph $G$, in consecutive order. Let $E_1,E_2,\dots,E_n$ be the partition
of the edge set of $G$, where for every $i=1,2,\dots,n$, every edge $e\in E_i$ 
has one endpoint in $V_{i-1}$ and one endpoint in $V_i$. Also recall that 
$w: E\rightarrow\NN$ is the weight function on the edges, and 
$k = \max\{|V_i|\colon i=0,1,2,\dots,n\}$ is the {\em width} of $G$. The
input $G$ is revealed gradually to the searcher. Let 
$$
G_i = (V_0\cup V_1\cup\cdots\cup V_i, E_1\cup E_2\cup\cdots\cup E_i)
$$ 
denote the 
subgraph that is revealed up to and including step $i$. The searcher, currently at a vertex 
$v_{i-1}\in V_{i-1}$ chooses a path in $G_i$ from $v_{i-1}$ to a vertex $v_i\in V_i$. Let 
$w_{G_i}(v_{i-1},v_i)$ denote the total weight of a shortest path from $v_{i-1}$ to $v_i$ 
in $G_i$. (Clearly, the searcher has no good reason to choose a longer path.) Formally, 
a pure strategy (a.k.a. deterministic algorithm) of the searcher is a function that maps, 
for all $i=1,2,\dots$, a layered graph $G_i$ (given including its partition into a sequence
of layers) to a vertex in $V_i$ (i.e., the searcher's next move). A mixed strategy (a.k.a. 
randomized algorithm) of the searcher is a probability distribution over such functions.

\quad
\subsubsection{Fractional strategies}
\quad
\\

Given a mixed strategy $S$ of the searcher, we can define a sequence $P_0,P_1,P_2,\dots$, 
where $P_i$ is a probability distribution over $V_i$. For every $v\in V_i$, $P_i(v)$ indicates the 
probability that the searcher's mixed strategy $S$ chooses to move to $v$ (i.e., $v_i = v$). A 
fractional strategy of the searcher is a function that maps, for all $i=1,2,\dots$, a layered graph 
$G_i$ to a probability distribution $P_i$ over $V_i$. For a fractional strategy choosing probability 
distributions $P_0,P_1,P_2,\dots,P_n$, we define its cost as follows. For $i=1,2,\dots,n$, 
let $\tau_i$ be a probability distribution over $V_{i-1}\times V_i$, with marginals $P_{i-1}$ on 
$V_{i-1}$ and $P_i$ on $V_i$, that minimizes 
$$
w_{G_i,\tau_i}(P_{i-1},P_i) = \sum_{u\in V_{i-1}}\sum_{v\in V_i} w_{G_i}(u,v) \tau_i(u,v).
$$
The cost of the strategy is then defined as $\sum_{i=1}^n w_{G_i,\tau_i}(P_{i-1},P_i)$.

The following lemma can be deduced through the reduction to small set chasing discussed
later, the fact that small set chasing is a special case of metrical task systems, and a similar
known result for metrical task systems. Here we give a straightforward direct proof.
\begin{lemma}\label{lm: fractional to mixed}
For every fractional strategy of the searcher there is a mixed strategy incurring the same cost.
\end{lemma}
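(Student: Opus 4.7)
The plan is to convert the fractional strategy into a \emph{behavioral} randomized strategy and then invoke the standard equivalence between behavioral and mixed strategies. Let the fractional strategy, against some adversary revealing $G_1,G_2,\dots,G_n$, produce the distributions $P_0,P_1,\dots,P_n$ together with the minimizing couplings $\tau_1,\dots,\tau_n$. I will define the behavioral strategy by specifying, for every history that ends with the searcher located at $v_{i-1}\in V_{i-1}$, a conditional distribution over the next move $v_i\in V_i$. Concretely, set
\[
\Pr[v_i = v \mid v_{i-1}=u] \;=\; \frac{\tau_i(u,v)}{P_{i-1}(u)}
\]
whenever $P_{i-1}(u)>0$, and choose the distribution arbitrarily otherwise (such histories occur with probability zero).

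Next I would verify inductively on $i$ that the induced marginal on $V_i$ is exactly $P_i$. The base case is immediate since $V_0=\{a\}$ and $P_0(a)=1$. For the inductive step,
\[
\Pr[v_i=v] \;=\; \sum_{u\in V_{i-1}} \Pr[v_{i-1}=u]\cdot\frac{\tau_i(u,v)}{P_{i-1}(u)} \;=\; \sum_{u\in V_{i-1}}\tau_i(u,v) \;=\; P_i(v),
\]
using the marginal property of $\tau_i$. Given this, the expected movement cost in step $i$ is
\[
\sum_{u\in V_{i-1}}\sum_{v\in V_i} P_{i-1}(u)\cdot\frac{\tau_i(u,v)}{P_{i-1}(u)}\cdot w_{G_i}(u,v) \;=\; w_{G_i,\tau_i}(P_{i-1},P_i),
\]
so summing over $i$ gives exactly the fractional cost.

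Finally I would argue that this behavioral strategy corresponds to a mixed strategy with the same cost distribution. Since the searcher has perfect recall (its full history is part of its information state) and the action sets at each step are finite, Kuhn's theorem applies and any behavioral strategy can be realized as a mixture over pure strategies that maps each possible revealed prefix $G_i$ and position $v_{i-1}$ to a vertex in $V_i$. The expected cost of this mixed strategy equals that of the behavioral strategy, which in turn equals the cost of the fractional strategy.

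I do not expect any serious obstacle here. The only point requiring mild care is the degenerate case $P_{i-1}(u)=0$, which is resolved by observing that such states are reached with probability zero so the choice of conditional distribution is irrelevant; and the formal passage from behavioral to mixed strategies, which is a routine application of the finite-horizon, perfect-recall setting.
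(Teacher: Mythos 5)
Your proposal is correct and follows essentially the same route as the paper's proof: defining the next move conditionally via $\tau_i(u,v)/P_{i-1}(u)$, verifying by induction that the marginals are the $P_i$ and the transition probabilities are the $\tau_i$, and concluding by linearity of expectation. Your explicit handling of the $P_{i-1}(u)=0$ case and the appeal to Kuhn's theorem for the behavioral-to-mixed passage are minor formal refinements that the paper leaves implicit, not a different argument.
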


\begin{proof}
Fix any fractional strategy of the searcher, and suppose that the searcher plays $P_0,P_1,P_2,\dots,P_n$ 
against a strategy $G_n$ of the designer. I.e, the designer chooses the number of rounds $n$, and plays 
in round $i$ the last layer of 
$$
G_i = (\{a\}\cup V_1\cup V_2\cup\cdots\cup V_i,E_1\cup E_2\cup\cdots\cup E_i).
$$
The searcher responds with $P_i$, which is a function of $G_i$. Notice that when the designer reveals 
$G_i$, the searcher can compute $\tau_i$, because that requires only the distance functions $w_{G_i}$ 
and the marginal probability distributions $P_{i-1}$ and $P_i$. We construct a mixed strategy of the searcher 
inductively as follows. It is sufficient to define, for every round $i$, the conditional probability distribution on 
the searcher's next move $v_i\in V_i$, given any possible play so far. Initially, at the start of round $1$, the 
searcher is deterministically at $a$. Suppose that the searcher reached a vertex $v_{i-1}\in V_{i-1}$. Then, 
we set 
$$
\Pr[v_i = v\in V_i\mid v_{i-1}] = \frac{\tau_i(v_{i-1},v)}{P_{i-1}(v_{i-1})}.
$$ 
Notice that the searcher can move from $v_{i-1}$ to $v_i$ along a path in $G_i$ of length $w_{G_i}(v_{i-1},v_i)$. 

We now analyze the cost of the mixed strategy thus defined. We prove by induction over the
number of rounds that in round $i$, for every pair of vertices $u\in V_{i-1}$ and $v\in V_i$,
the probability that the searcher's chosen pure strategy (which is a random variable) reaches $v$
is $P_i(v)$ and the probability that this strategy moves from $u$ to $v$ is $\tau_i(u,v)$
(the latter assertion is required to hold for $i > 0$). The base case is $i=0$, which is trivial, 
as the searcher's initial position is $a$, $P_0(a) = 1$, and the statement about $\tau$ is vacuous.
So, assume that the statement is true for $i-1$. By the definition of the strategy, in round $i$,
for every $v\in V_i$,
\begin{eqnarray*}
\Pr[v_i = v] & = & \sum_{u\in V_{i-1}} \Pr[v_{i-1} = u] \cdot \Pr[v_i = v\mid v_{i-1} = u] \\
                  & = & \sum_{u\in V_{i-1}} P_{i-1}(u)\cdot \frac{\tau_i(u,v)}{P_{i-1}(u)} \\
                  & = & P_i(v),
\end{eqnarray*}
where the penultimate equality uses the induction hypothesis, and the final equality uses
the condition on the marginals of $\tau_i$ at $V_i$.
Similarly,
\begin{eqnarray*}
&    & \Pr[\hbox{the searcher moves from } u \hbox{ to } v] \\
& = & \Pr[v_{i-1} \!=\! u]\!\cdot\! \Pr[\hbox{the searcher moves from } u \hbox{ to } v\!\mid\! v_{i-1} \!=\! u] \\
& = & P_{i-1}(u)\cdot \frac{\tau_i(u,v)}{P_{i-1}(u)} \\
& = & \tau_i(u,v).
\end{eqnarray*}
Thus, by linearity of expectation, the searcher's expected total cost is
$$
\sum_{i=1}^n \sum_{u\in V_{i-1}} \sum_{v\in V_i} \tau_i(u,v)\cdot w_{G_i}(u,v),
$$
and this is by definition equal to the cost of the searcher's fractional strategy.
\end{proof}

\subsubsection{Layered trees}
\quad
\\

We now discuss special cases of layered graph traversal whose solution
implies a solution to the general case. We begin with a definition.
\begin{definition}
A rooted layered tree is an acyclic layered graph, where every vertex $v\ne a$ has exactly one 
neighbor in the preceding layer. We say that $a$ is the root of such a tree.
\end{definition}

\begin{theorem}[{Fiat et al.~\cite[Section 2]{FFKRRV91}}]\label{thm: layered trees suffice}
Suppose that the designer is restricted to play a width $k$ rooted layered tree with edge weights 
in $\{0,1\}$, and suppose that there is a $C$-competitive (pure or mixed) strategy of the searcher 
for this restricted game. Then, there is a $C$-competitive (pure or mixed, respectively) strategy 
of the searcher for the general case, where the designer can play any width $k$ layered graph 
with non-negative integer edge weights.
\end{theorem}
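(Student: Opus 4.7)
The plan is to compose two reductions: first, from width-$k$ integer-weighted layered graphs to width-$k$ integer-weighted rooted layered trees; second, from there to width-$k$ $\{0,1\}$-weighted trees. For the first reduction, I would build a tree $T$ online as $G$ is revealed: when layer $V_i$ appears, for each $v\in V_i$ I designate as its parent $p(v)\in V_{i-1}$ some neighbor lying on a shortest $a$-to-$v$ path in $G_i$, and attach the edge $(p(v),v)$ to $T$ with its original weight $w(p(v),v)$. Because shortest paths in a layered graph with non-negative weights never zig-zag between layers, $w_{G_i}(a,v)=w_G(a,v)$ for $v\in V_i$, so the parent assignment is final at step $i$ and never needs revision. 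An induction on layer index then gives $w_T(a,v)=w_G(a,v)$ for all $v$, and since $T$ is a spanning subtree of $G$ with the same weights, $w_G(u,v)\le w_T(u,v)$ for all pairs $u,v$. The width of $T$ is bounded by that of $G$.

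For the second reduction, I would apply a \emph{synchronized subdivision}. Let $W_i$ denote the maximum weight of any tree edge from $V_{i-1}$ to $V_i$, and set $D_i:=\sum_{j\le i}W_j$. Replace each such edge of weight $w$ by a path consisting of $W_i-w$ weight-$0$ edges followed by $w$ weight-$1$ edges, so that every edge in layer $i$ becomes a path of exactly $W_i$ unit segments. The resulting tree $T'$ has $\{0,1\}$ weights and preserves all pairwise distances. At any combinatorial depth $d$ with $D_{i-1}<d<D_i$, the vertices of $T'$ are exactly the intermediate vertices on the $|V_i|\le k$ subdivided edges of layer $i$, while at boundary depth $D_i$ they are the vertices of $V_i$ themselves. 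Therefore the width of $T'$ is at most $k$.

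Given the hypothesized $C$-competitive algorithm $\mathcal A$ for the $\{0,1\}$-tree game, I would run $\mathcal A$ on $T'$ by feeding it the $W_i$ subdivided layers in order whenever $V_i$ is revealed in $G$. Writing $u'_j$ for the position of $\mathcal A$ at combinatorial depth $j$, the construction guarantees $u'_{D_i}\in V_i$, so I declare the graph searcher's step-$i$ move to be $v_i:=u'_{D_i}$. The move cost in the graph satisfies $w_{G_i}(v_{i-1},v_i)\le w_T(v_{i-1},v_i)=w_{T'}(v_{i-1},v_i)\le \sum_{j=D_{i-1}+1}^{D_i}w_{T'}(u'_{j-1},u'_j)$ by the triangle inequality. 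Summing over $i$, the graph searcher's total cost is at most $\mathcal A$'s total cost, which by the competitive guarantee is at most $C\cdot w_{T'}(a,b)=C\cdot w_G(a,b)$. The randomized case follows identically by linearity of expectation.

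The main obstacle is ensuring that the second reduction preserves the width. A naive edge-by-edge subdivision can inflate the width: a short sibling edge finishes before longer ones do, and intermediate vertices at the same combinatorial depth from different edges accumulate (e.g., a root with one weight-$0$ child that then branches, alongside one long weight-$W$ child, produces three active vertices at depth $2$). The synchronized weight-$0$ padding aligns every edge within an original layer to the same number of subdivided levels, so that each intermediate depth contributes exactly one vertex per tree edge of that layer, keeping the width bounded by $k$.
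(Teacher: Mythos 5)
Your second step (the synchronized subdivision of an integer-weighted layered tree into a $\{0,1\}$-weighted layered tree of the same width, with $0$-weight padding so that sibling edges of a layer finish at the same depth) is sound and is essentially the same trick the paper itself uses in Corollary~\ref{cor: layered binary trees suffice}. The problem is your first step. The claim that ``shortest paths in a layered graph with non-negative weights never zig-zag between layers'', and hence that $w_{G_i}(a,v)=w_G(a,v)$ and that parent assignments are final, is false: the graph is \emph{undirected}, so a shortest path may run forward into a later layer and come back, and the distance from $a$ to an already-revealed vertex can strictly drop when new layers appear (this is exactly the point of footnote~2 of the paper). Concretely, take $V_0=\{a\}$, $V_1=\{u_1,u_2\}$, $V_2=\{v_1,v_2\}$, $V_3=\{b\}$ with $w(a,u_1)=0$, $w(a,u_2)=M$, $w(u_1,v_1)=w(u_2,v_1)=w(u_2,v_2)=w(v_2,b)=0$, $w(v_1,b)=M$; the width is $2$. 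Here $w_G(a,b)=0$ via the zig-zag path $a,u_1,v_1,u_2,v_2,b$. In your tree $T$, however, the parent of $v_2$ must be $u_2$ (its only neighbor in $V_1$) and the parent of $u_2$ must be $a$ with edge weight $M$, so $w_T(a,v_2)=w_T(a,b)=M$ even though $w_G(a,b)=0$, and already $w_T(a,u_2)=M\neq 0=w_{G_2}(a,u_2)$, contradicting your induction.

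This is fatal to the accounting at the end: you bound the graph searcher's cost by $C\cdot w_{T'}(a,b)=C\cdot w_T(a,b)$, which in the example is $C\cdot M$ while $\opt=w_G(a,b)=0$ (replacing the zero weights by $1$ gives $\opt=O(1)$ against a bound of $\Omega(CM)$, so the ratio is unbounded as $M\to\infty$). The inequality $w_G(u,v)\le w_T(u,v)$, which you use to transfer the online cost, is indeed the easy direction; the direction that breaks is $\opt_T\le\opt_G$. Moreover, no online choice of a spanning subtree of $G$ carrying the original edge weights can fix this, since the root-to-$v$ distance in such a tree is the length of some layer-monotone path of $G$, which can exceed $w_G(a,v)$ by an arbitrary factor, and the ``right'' access route to a vertex can change after its layer has been processed. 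Coping with this non-monotonicity of the revealed distances is precisely the substance of the reduction of Fiat et al.~\cite{FFKRRV91} that the theorem is quoted from (the paper gives no proof of its own, relying on that citation), and your construction does not recover it.
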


A width $k$ rooted layered tree is binary iff every vertex has at most two neighbors in the following 
layer. (Thus, the degree of each node is at most $3$.)
\begin{corollary}\label{cor: layered binary trees suffice}
The conclusion of Theorem~\ref{thm: layered trees suffice} holds if there is a $C$-competitive strategy 
of the searcher for the game restricted to the designer using width $k$ rooted layered binary trees with 
edge weights in $\{0,1\}$. Moreover, the conclusion holds if in addition we require that between two 
adjacent layers there is at most one edge of weight $1$.
\end{corollary}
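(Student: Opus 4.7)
The plan is to derive the corollary from Theorem~\ref{thm: layered trees suffice} by composing two online reductions on width-$k$ rooted layered trees with $\{0,1\}$ edge weights: first a binarization, then a spreading of weight-$1$ edges. Both reductions preserve the width bound $k$ and the $\{0,1\}$-weight constraint, and both are online in the sense that the refined intermediate layers can be generated on-the-fly from each reveal of the original adversary. A $C$-competitive algorithm for the fully restricted problem is then simulated against the original game, and its cost translates back without loss since all newly introduced edges have weight $0$ except for the spread-out weight-$1$ edges, which are in one-to-one correspondence with the original weight-$1$ edges.

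For binarization, I would replace each parent $u\in V_i$ having $c_u$ children by a balanced binary subtree of depth $\lceil\log_2 c_u\rceil$ rooted at $u$, whose leaves are the original children of $u$. Internal subtree edges receive weight $0$; edges incident to the leaves retain their original $\{0,1\}$ weights. To make all original children belonging to $V_{i+1}$ end up in a common new layer, I pad the shorter subtrees by appending weight-$0$ chains to their leaves. The width of any intermediate layer at sub-depth $d$ is at most $\sum_{u\in V_i}\min(2^d, c_u)\le\sum_{u\in V_i}c_u = |V_{i+1}|\le k$, since each subtree contributes at most $c_u$ vertices at any depth (whether through internal nodes, leaves, or padding). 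Root-to-leaf distances are preserved, so an online algorithm for the binary version induces one for the general version by feeding the binarized layers derived from each original reveal.

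For weight-spreading, consider a binary transition from $V_i$ to $V_{i+1}$ with $T\ge 1$ weight-$1$ edges. I would insert $T-1$ intermediate layers, creating $T$ transitions, each carrying exactly one weight-$1$ edge. The weight-$1$ edges $e_1,\dots,e_T$ are processed in some order; at transition $j$, the edge $e_j=(u,v)$ is applied while every other vertex of the previous layer is propagated unchanged via a weight-$0$ self-edge. If $u$ has an as-yet-unprocessed second child $v'$ reached by a weight-$1$ edge, then $u$'s processing is split across two transitions: at transition $j$, $u$ produces both $v$ (via $e_j$) and a weight-$0$ residual copy $u'$, and at the later transition for $(u,v')$ the residual $u'$ produces $v'$. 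If $u$'s second child is reached by a weight-$0$ edge, both children are produced in transition $j$ itself. The intermediate width equals $|V_i|$ plus the number of splits completed so far (each split contributing exactly $+1$), so it grows monotonically from $|V_i|$ to $|V_{i+1}|$ and remains bounded by $k$; the binary property is preserved because every intermediate parent has at most two children (a real child plus possibly a residual copy).

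The main obstacle is the width analysis in the weight-spreading step: a naive construction that simultaneously maintains full copies of both $V_i$ and $V_{i+1}$ in intermediate layers would yield width up to $2k$, breaking the reduction. The residual-copy device ensures that each original parent contributes at most $c_u$ vertices at any intermediate stage (rather than $1+c_u$), so the running width stays sandwiched between $|V_i|$ and $|V_{i+1}|$. Combined with Theorem~\ref{thm: layered trees suffice}, which handles arbitrary non-negative integer edge weights via the reduction to $\{0,1\}$ weights on layered trees, these two online reductions yield the corollary.
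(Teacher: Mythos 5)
Your binarization step is essentially the paper's own construction (balanced weight-$0$ subtrees below each parent, leaf edges inheriting the original $\{0,1\}$ weights, padding so that all original children land in a common simulated layer), and your width bound $\sum_{u\in V_i}\min(2^d,c_u)\le\sum_{u\in V_i}c_u=|V_{i+1}|\le k$, where $c_u$ is the number of children of $u$, is fine. The weight-spreading step is where you diverge, and as stated it fails at exactly the point you identify as the main obstacle: the width accounting. The claim that an intermediate layer has size $|V_i|$ plus the number of completed splits, hence stays between $|V_i|$ and $|V_{i+1}|$ and below $k$, breaks down when some vertex of $V_i$ has no child in $V_{i+1}$ (such dead ends are part of the model; the paper's later reduction to the evolving tree explicitly handles them). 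Concretely, take $k=2$, $V_i=\{u_1,u_2\}$ with $u_1$ childless and $u_2$ having two children via weight-$1$ edges, so $|V_{i+1}|=2$ and $T=2$: after your first transition the intermediate layer contains the produced child, the residual copy of $u_2$, and the propagated copy of $u_1$ (your rule propagates ``every other vertex of the previous layer''), i.e.\ three vertices in a width-$2$ instance; propagating $u_1$ to the end would moreover plant a spurious vertex in the final layer, corrupting the correspondence with $V_{i+1}$. The right invariant is that each parent's footprint in an intermediate layer is at most $c_u$ whenever $c_u\ge 1$, which gives $\sum_{u:c_u\ge1}c_u=|V_{i+1}|\le k$, but this requires dropping childless vertices from the propagation altogether. (Also, a parent releasing a weight-$0$ sibling together with its weight-$1$ child already raises its footprint from $1$ to $2$ without any ``split'', so the exact formula is off even without dead ends, though that inaccuracy does not threaten the bound.) With these repairs your construction does yield the corollary.

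It is worth noting how the paper sidesteps this bookkeeping entirely. In a width-$k$ layered tree each vertex of a layer has exactly one parent edge, so between two adjacent layers there are at most $k$ edges; the paper inserts $k-1$ simulated layers and replaces the $j$-th edge (edges indexed arbitrarily) by a path of $k$ edges, placing its weight-$1$ edge, if any, at position $j$ of that path. Subdividing edges preserves the width, the binary property, and all root-to-leaf distances trivially, and the staggering by index guarantees at most one weight-$1$ edge per transition with no residual copies, no processing order, and no case analysis. Your approach uses fewer intermediate layers (only $T-1$ per transition), but since the number of simulated layers is irrelevant to the competitive ratio, the simpler subdivision argument is preferable.
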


\begin{proof}
Suppose that the designer plays an arbitrary width $k$ layered tree. The searcher converts the tree 
on-the-fly into a width $k$ layered binary tree, uses the strategy for binary trees, and maps the moves 
back to the input tree. The conversion is done as follows. Between every two layers that the designer 
generates, the searcher simulates $\lceil \log_2 k \rceil-1$ additional layers. If a vertex $u\in V_{i-1}$ 
has $m\le k$ neighbors $v_1,v_2,\dots,v_m\in V_i$, the searcher places in the simulated layers between 
$V_{i-1}$ and $V_i$ a layered binary tree rooted at $u$ with $v_1,v_2,\dots,v_m$ as its leaves. Notice 
that this can be done simultaneously for all such vertices in $V_{i-1}$ without violating the width constraint 
in the simulated layers. The lengths of the new edges are all $0$, except for the edges touching the leaves. 
For $j=1,2,\dots,m$, the edge touching $v_j$ inherits the length $w(\{u,v_j\})$ of the original edge $\{u,v_j\}$. 
Clearly, any path traversed in the simulated tree corresponds to a path traversed in the input tree that 
has the same cost---simply delete from the path in the simulated tree the vertices in the simulated layers; 
the edges leading to them all have weight $0$. The additional requirement is easily satisfied by now making
the following change. Between every two layers $i-1,i$ of the rooted layered binary tree insert $k-1$ simulated
layers. Replace the $j$-th edge between layer $i-1,i$ (edges are indexed arbitrarily) by a length $k$ path. If 
the original edge has weight $0$, all the edges in the path have weight $0$. If the original edge has weight 
$1$, then all the edges in the path have weight $0$, except for the $j$-th edge that has weight $1$.
\end{proof}

\subsection{Small set chasing}

This two-person game is defined with respect to an underlying metric space ${\cal M} = (X,\dist)$. 
The game alternates between the adversary and the algorithm. The latter starts at an arbitrary point
$x_0\in X$. The adversary decides on the number of rounds $n$ that the game will be played (this
choice is unknown to the algorithm until after round $n$). In round $i$ of the game, the adversary 
chooses a finite set $X_i\subset X$. The algorithm must then move to a point $x_i\in X_i$. The game 
is parametrized by an upper bound $k$ on $\max_{i=1}^n |X_i|$. The algorithm pays 
$\sum_{i=1}^n \dist(x_{i-1},x_i)$ and the adversary pays 
$$
\min\left\{\sum_{i=1}^n \dist(y_{i-1},y_i):\ y_0=x_0\wedge y_1\in X_1\wedge\cdots\wedge y_n\in X_n\right\}.
$$
\begin{theorem}[{Fiat et al.~\cite[Theorem 18]{FFKRRV91}}]
For every $k\in\NN$ and for every $C = C(k)$, there exists a pure (mixed, respectively) $C$-competitive 
online algorithm for cardinality $k$ set chasing in every metric space with integral distances iff there exists 
a pure (mixed, respectively) $C$-competitive online algorithm for width $k$ layered graph traversal.
\end{theorem}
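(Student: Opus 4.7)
My plan is to establish the equivalence in two directions, leveraging the reduction to layered binary trees for the harder direction.

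$(\Leftarrow)$ Assuming a $C$-competitive LGT algorithm, I will reduce MSS to LGT. Given an MSS instance with integral metric $(X,\dist)$, initial point $x_0$, and requests $X_1,\ldots,X_n$ with $|X_i|\le k$, I construct a layered graph online by setting $V_0=\{x_0\}$, $V_i=X_i$, and placing an edge $\{u,v\}$ of weight $\dist(u,v)$ between each $u\in V_{i-1}$ and $v\in V_i$; the width is at most $k$. By the triangle inequality, the shortest $u$-to-$v$ path in $G_i$ has length exactly $\dist(u,v)$, since any route detouring through earlier layers costs at least $\dist(u,v)$. Hence the LGT algorithm's moves translate step-by-step into an MSS trajectory of identical cost. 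A minimum-cost root-to-leaf path in the final graph is necessarily monotone (visits one vertex per layer by the same triangle inequality argument), so the LGT offline optimum coincides with the MSS offline optimum.

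$(\Rightarrow)$ For the harder direction, I will first invoke Corollary~\ref{cor: layered binary trees suffice} to reduce LGT to width-$k$ rooted layered binary trees whose edge weights lie in $\{0,1\}$. The crucial observation I exploit is that on such a tree, the induced tree metric is \emph{temporally consistent}: once two vertices are revealed, their pairwise distance is fixed for the rest of the game, because new vertices only arrive as descendants of existing leaves and therefore cannot shorten the unique path between older vertices. Thus the evolving tree fits, as a pseudo-metric, inside a fixed universal metric space. I take this space to be $\mathcal{U}$, an infinite rooted tree of depth $k$ with countably infinite branching at each internal node and every edge of unit weight. The embedding $\phi$ is built online: when a leaf $v$ is attached to $u$ via a weight-$1$ edge, send $v$ to a fresh unused child of $\phi(u)$ in $\mathcal{U}$; for a weight-$0$ edge, set $\phi(v)=\phi(u)$. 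I then feed requests $\phi(V_i)$ to the given MSS algorithm and translate each move back to an arbitrary preimage in $V_i$.

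The main obstacle I anticipate is verifying that this evolving-to-static reduction preserves both the online cost and the offline optimum. Online cost equality reduces to showing $w_{G_i}(v_{i-1},v_i)=\dist_{\mathcal{U}}(\phi(v_{i-1}),\phi(v_i))$ at every step: by temporal consistency, the left-hand side equals the tree distance between $v_{i-1}$ and $v_i$ in the final LGT tree, and by construction of $\phi$ this tree distance equals the number of weight-$1$ edges on the connecting path, which is precisely the right-hand side. Optimum equality uses that an optimal root-to-$V_n$ path in the LGT tree is monotone and induces a valid MSS trajectory of equal cost, and conversely any MSS trajectory corresponds to a valid LGT path. The delicate point is that the MSS algorithm presumes the static metric $\mathcal{U}$ known upfront, whereas $\phi$ is built adaptively; this is justified because $\mathcal{U}$ has enough unused branches at every level to absorb any layered binary tree of depth $\le k$, so the MSS competitive guarantee on $\mathcal{U}$ applies regardless of which subset of $\mathcal{U}$ the adaptive embedding ends up using.
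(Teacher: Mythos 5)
The paper itself does not prove this statement---it imports it verbatim from Fiat et al.~\cite{FFKRRV91}---so your proposal can only be judged on its own merits. Your two-directional plan (triangle inequality for the easy direction; reduction to width-$k$ layered $\{0,1\}$-weighted trees via Corollary~\ref{cor: layered binary trees suffice}, the observation that tree distances are temporally consistent, and an online isometric embedding into a fixed universal tree metric for the hard direction) is the right skeleton and essentially the standard argument. However, as written the $(\Rightarrow)$ direction has a concrete error: your universal space $\mathcal{U}$ has depth $k$, but $k$ bounds only the \emph{width} of the layered tree, not its number of layers or its weighted depth. The game may run for arbitrarily many layers, and a root-to-leaf path in a width-$k$ layered binary tree with $\{0,1\}$ weights can contain arbitrarily many weight-$1$ edges (already for $k=2$, the lost-cow instance is two arbitrarily long unit-weight paths). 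Since your embedding $\phi$ descends one level of $\mathcal{U}$ per weight-$1$ edge, after $k$ such edges $\phi(u)$ is a leaf of $\mathcal{U}$ and there is no ``fresh child'' left to receive the next vertex, so the construction breaks. The fix is immediate---the hypothesis gives a $C$-competitive chaser in \emph{every} metric space with integral distances, so take $\mathcal{U}$ to be the tree with countably infinite branching, unit edge weights and \emph{unbounded} depth; temporal consistency and your isometry/subtree argument then go through unchanged. (You appear to have conflated the depth-$k$ bound of the paper's evolving tree game, which concerns the contracted combinatorial tree, with the metric depth of the layered tree, which is unbounded.)

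Two smaller points. In the $(\Leftarrow)$ direction your constructed instance ends with layer $X_n$, which may contain up to $k$ vertices, whereas the LGT guarantee is stated against $w_G(a,b)$ for a single target $b$ occupying the last layer; when the MSS adversary halts you should append a final layer $\{b\}$ joined by weight-$0$ edges to all of $V_n$. This changes neither the online cost nor the optimum (any shortest $a$--$b$ walk can be assumed not to pass through $b$ before its endpoint, and then the same triangle-inequality argument bounds it below by the best monotone trajectory), but without it the competitive guarantee does not formally apply. Finally, you should state explicitly that both translations map trajectories to trajectories cost-preservingly, so the pure and mixed versions of the claim follow simultaneously. With these patches your argument is correct.
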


\subsection{Reduction to evolving trees}

The following lemma is a more precise restatement of Theorem~\ref{thm:reductionShort} from the introduction. Combined with Theorem~\ref{thm:evolvingIntro}, it yields Theorem~\ref{thm:mainIntro}.
\begin{lemma}\label{lm: LGT to DTG reduction}
Let $k\in\NN$ and $C = C(k)$. 
Suppose that there exists a (pure, mixed, fractional, 
respectively) $C$-competitive strategy for the evolving tree game on binary trees of maximum depth 
$k$. 
Then, there exists a (pure, mixed, fractional, 
respectively) $C$-competitive strategy for traversing width $k$ layered graphs.
\end{lemma}

\begin{proof}
Consider at first fractional strategies.
By Lemma~\ref{lm: fractional to mixed} and Corollary~\ref{cor: layered binary trees suffice}, we can
restrict our attention to designing fractional strategies on width $k$ rooted layered binary trees with 
edge weights in $\{0,1\}$. Now, suppose that we have a fractional $C$-competitive strategy for the 
depth $k$ evolving tree game. We use it to construct a fractional $C$-competitive strategy for the
traversal of width $k$ rooted layered binary trees as follows. To simplify the proof, add a virtual layer 
$-1$ containing a single node $p(a)$ connected to the source $a$ with an edge of weight $0$.
We construct the layered graph strategy by induction over the current layer. Our induction hypothesis 
is that in the current state: 
\begin{enumerate}
\item The evolving tree is homeomorphic to the layered subtree spanned by the paths from $p(a)$ 
         to the nodes in the current layer. 
\item In this homeomorphism, $r$ is mapped to $p(a)$ and the leaves of the evolving tree are 
         mapped to the leaves of the layered subtree, which are the nodes in the current layer.
\item In this homeomorphism, each edge of the evolving tree is mapped to a path of the same 
         weight in the layered subtree. 
\item The probability assigned to a leaf by the fractional strategy for the evolving tree is equal to the 
         probability assigned to its homeomorphic image (a node in the current layer) by the fractional 
         strategy for the layered tree.
\end{enumerate}

Initially, the traversal algorithm occupies the source node $a$ with probability $1$. The evolving 
tree consists of the two initial nodes $r$ and $c_r$, with a $0$-weight edge connecting them. The 
homeomorphism maps $r$ to $p(a)$ and $c_r$ to $a$. The evolving tree algorithm occupies $c_r$ 
with probability $1$. Hence, the induction hypothesis is satisfied at the base of the induction. For 
the inductive step, consider the current layer $i-1$, the new layer $i$ and edges between them. If a node in layer $i-1$ has no child in layer $i$, we delete the homeomorphic 
preimage (which must be a leaf and cannot be $c_r$) in the evolving tree. If a node $v$ in layer $i-1$ has two children in layer $i$, we execute a fork step where we generate two new leaves and connect them to the preimage 
of $v$ (a leaf) in the evolving tree, and extend the homeomorphism to the new leaves in the obvious
way. Otherwise, if a node $v$ in layer $i-1$ has a single child in layer $i$, we modify the
homeomorphism to map the preimage of $v$ to its child in layer $i$. After executing as many such 
discrete steps as needed, if there is a weight $1$ edge connecting a node $u$ in layer $i-1$ to a 
node $v$ in layer $i$, we execute a continuous step, increasing the weight of the edge incident on
the homeomorphic preimage of $v$ in the evolving tree (which must be a leaf) for a time interval of length 
$1$. After executing all these steps, we simply copy the probability distribution on the leaves of the evolving 
tree to the homeomorphic images in layer $i$ of the layered tree. This clearly satisfies all the
induction hypotheses at layer $i$. Moreover, since the evolving tree has at most $k$ leaves at any time, and each non-leaf other than the root has exactly two children, its depth is at most $k$.

Notice that since the target $b$ is assumed to be the only node in the last layer, when we reach it, the 
evolving tree is reduced to a single edge connecting $r$ to the homeomorphic preimage of $b$. The weight 
of this edge equals the weight of the path in the layered tree from $p(a)$ to $b$, which is the same as the
weight of the path from $a$ to $b$ (because the edge $\{p(a),a\}$ has weight $0$). Moreover, the fractional
strategy that is induced in the layered tree does not pay more than the fractional strategy in the evolving
tree. Hence, it is $C$-competitive. 

Finally, deterministic strategies are fractional strategies restricted to probabilities in $\{0,1\}$, hence the claim 
for deterministic strategies is a corollary. This also implies the claim for mixed strategies, as they are probability 
distributions over pure strategies. 
\end{proof}

\section{Conclusion}
We have seen an $O(k^2\log d_{\max})$-competitive algorithm for the evolving tree game, and a resulting $O(k^2)$-competitive algorithm for width $k$ layered graph traversal and chasing sets of cardinality $k$, which is tight on account of the $\Omega(k^2)$ lower bound in~\cite{BCR22}. We note that the depth $k$ evolving tree game is strictly more general than width $k$ layered graph
traversal. In particular, the evolving binary tree corresponding to the width $k$ layered graph game 
has depth at most $k$ and also at most $k$ leaves. However, in general a depth $k$ binary tree may 
have $2^{k-1}$ leaves. Our evolving tree algorithm and analysis applies without further restriction on
the number of leaves.

The main conceptual message from our work is that the powerful framework of online mirror descent works even in evolving metric spaces. This may potentially have further applications for other problems. For example, obtaining a polylog$(k)$-competitive algorithm for the $k$-server problem seems to require embedding the underlying metric space into dynamically evolving hierarchically separated trees. In another interesting direction, a variant of the evolving tree game introduced here was recently used in~\cite{CossonM24} to obtain improved bounds for the collective tree exploration problem.

\section*{Acknowledgments}
We thank the anonymous reviewers at SICOMP for their constructive feedback that helped improve this article.

\bibliographystyle{plainurl}
\bibliography{biblio}

\begin{thebibliography}{10}

\bibitem{ACEPS20}
A.~Antoniadis, C.~Coester, M.~Eli{\'{a}}s, A.~Polak, and B.~Simon.
\newblock Online metric algorithms with untrusted predictions.
\newblock In {\em Proc. of the 37th Int'l Conf. on Machine Learning}, pages
  345--355, 2020.

\bibitem{ABM93}
Y.~Azar, A.~Z. Broder, and M.~S. Manasse.
\newblock On-line choice of on-line algorithms.
\newblock In {\em Proc. of the 4th Ann. ACM-SIAM Symp. on Discrete Algorithms},
  pages 432--440, 1993.

\bibitem{BCR88}
R.~A. Baeza-Yates, J.~C. Culberson, and G.~J.~E. Rawlins.
\newblock Searching with uncertainty.
\newblock In {\em Proc. of Scandinavian Workshop on Algorithm Theory}, pages
  176--189, 1988.

\bibitem{BC21}
N.~Bansal and C.~Coester.
\newblock Online metric allocation and time-varying regularization.
\newblock In {\em 30th Ann. European Symp. on Algorithms}, pages 13:1--13:13,
  2022.

\bibitem{BCKPV22}
N.~Bansal, C.~Coester, R.~Kumar, M.~Purohit, and E.~Vee.
\newblock Learning-augmented weighted paging.
\newblock In {\em Proc. of the 33rd Ann. ACM-SIAM Symp. on Discrete
  Algorithms}, 2022.

\bibitem{Bel57}
R.~Bellman.
\newblock {\em Dynamic Programming}.
\newblock Princeton University Press, 1957.

\bibitem{BLS87}
A.~Borodin, N.~Linial, and M.~E. Saks.
\newblock An optimal online algorithm for metrical task systems.
\newblock In {\em Proc. of the 19th Ann. {ACM} Symp. on Theory of Computing},
  pages 373--382, 1987.

\bibitem{BCR22}
S.~Bubeck, C.~Coester, and Y.~Rabani.
\newblock The randomized $k$-server conjecture is false!
\newblock In {\em Proceedings of the 55th Annual {ACM} Symposium on Theory of
  Computing, {STOC}}, 2023.

\bibitem{BCLL19}
S.~Bubeck, M.~B. Cohen, J.~R. Lee, and Y.-T. Lee.
\newblock Metrical task systems on trees via mirror descent and unfair gluing.
\newblock In {\em Proc. of the 30th Ann. {ACM-SIAM} Symp. on Discrete
  Algorithms}, pages 89--97, 2019.

\bibitem{BCLLM18}
S.~Bubeck, M.~B. Cohen, Y.-T. Lee, J.~R. Lee, and A.~Madry.
\newblock $k$-server via multiscale entropic regularization.
\newblock In {\em Proc. of the 50th Ann. {ACM} Symp. on Theory of Computing},
  pages 3--16, 2018.

\bibitem{Bur96}
W.~R. Burley.
\newblock Traversing layered graphs using the work function algorithm.
\newblock {\em J. Algorithms}, 20(3):479--511, 1996.

\bibitem{CL91}
M.~Chrobak and L.~Larmore.
\newblock The server problem and on-line games.
\newblock In {\em On-Line Algorithms: Proc. of a DIMACS Workshop}, pages
  11--64. American Math. Society, 1991.

\bibitem{CL93}
M.~Chrobak and L.~Larmore.
\newblock Metrical service systems: deterministic strategies.
\newblock Technical Report UCR-CS-93-1, UC Riverside, 1993.

\bibitem{CK19}
C.~Coester and E.~Koutsoupias.
\newblock The online $k$-taxi problem.
\newblock In {\em Proc. of the 51st Ann. {ACM} Symp. on Theory of Computing},
  pages 1136--1147, 2019.

\bibitem{CL19}
C.~Coester and J.~R. Lee.
\newblock Pure entropic regularization for metrical task systems.
\newblock In {\em Proc. of the 32nd Conf. on Learning Theory}, pages 835--848,
  2019.

\bibitem{CossonM24}
Romain Cosson and Laurent Massouli{\'{e}}.
\newblock Collective tree exploration via potential function method.
\newblock In {\em 15th Innovations in Theoretical Computer Science Conference,
  {ITCS}}, 2024.

\bibitem{EL21}
F.~Ebrahimnejad and J.~R. Lee.
\newblock Multiscale entropic regularization for {MTS} on general metric
  spaces.
\newblock In {\em 13th Innovations in Theoretical Computer Science Conference},
  pages 60:1--60:21, 2022.

\bibitem{FFKRRV91}
A.~{Fiat}, D.~P. {Foster}, H.~J. {Karloff}, Y.~{Rabani}, Y.~{Ravid}, and
  S.~{Vishwanathan}.
\newblock Competitive algorithms for layered graph traversal.
\newblock In {\em Proc. of the 32nd Ann. {IEEE} Symp. of Foundations of
  Computer Science}, pages 288--297, 1991.

\bibitem{FRR90}
A.~Fiat, Y.~Rabani, and Y.~Ravid.
\newblock Competitive $k$-server algorithms.
\newblock In {\em Proc. of the 31st Ann. {IEEE} Symp. on Foundations of
  Computer Science}, pages 454--463, 1990.

\bibitem{FRRS94}
A.~Fiat, Y.~Rabani, Y.~Ravid, and B.~Schieber.
\newblock A deterministic ${O}(k^3)$-competitive $k$-server algorithm for the
  circle.
\newblock {\em Algorithmica}, 11:572--578, 1994.

\bibitem{Gal80}
S.~Gal.
\newblock {\em Search Games}.
\newblock Academic Press, New York, NY, USA, 1980.

\bibitem{KMSY94}
M.-Y. Kao, Y.~Ma, M.~Sipser, and Y.~Yin.
\newblock Optimal constructions of hybrid algorithms.
\newblock In {\em Proc. of the 5th Ann. ACM-SIAM Symp. on Discrete Algorithms},
  pages 372--381, 1994.

\bibitem{KRT93}
M.-Y. Kao, J.~H. Reif, and S.~R. Tate.
\newblock Searching in an unknown environment: An optimal randomized algorithm
  for the cow-path problem.
\newblock In {\em Proc. of the 4th Ann. {ACM-SIAM} Symp. on Discrete
  Algorithms}, pages 441--447, 1993.

\bibitem{Lee18}
J.~R. Lee.
\newblock Fusible {HST}s and the randomized $k$-server conjecture.
\newblock In {\em Proc. of the 59th Ann. {IEEE} Symp. on Foundations of
  Computer Science}, pages 438--449, 2018.

\bibitem{LV21}
T.~Lykouris and S.~Vassilvitskii.
\newblock Competitive caching with machine learned advice.
\newblock {\em J. {ACM}}, 68(4):24:1--24:25, 2021.

\bibitem{PY89}
C.~H. Papadimitriou and M.~Yannakakis.
\newblock Shortest paths without a map.
\newblock In {\em Proc. of the 16th Int'l Colloq. on Automata, Languages and
  Programming}, pages 610--620, 1989.

\bibitem{Ram93}
H.~Ramesh.
\newblock On traversing layered graphs on-line.
\newblock In {\em Proc. of the 4th Ann. ACM-SIAM Symp. on Discrete Algorithms},
  pages 412--421, 1993.

\bibitem{Sch12}
A.~Schrijver.
\newblock On the history of the shortest path problem.
\newblock In {\em Documenta Mathematica, {\rm Extra Volume: Optimization
  Stories}}, pages 155--167, 2012.

\end{thebibliography}

\end{document}